
\documentclass{article}
\pdfpagewidth=8.5in
\pdfpageheight=11in

\usepackage{kr}

\newif\iffull

\fullfalse

\usepackage[normalem]{ulem}  %

\usepackage{times}
\usepackage{soul}
\usepackage{url}
\usepackage[hidelinks]{hyperref}
\usepackage[utf8]{inputenc}
\usepackage[small]{caption}
\usepackage{graphicx}
\usepackage{amsmath}
\usepackage{amsthm}
\usepackage{booktabs}
\usepackage{algorithmic}
\urlstyle{same}

\pdfinfo{
/TemplateVersion (KR.2026.0)
}

\usepackage{graphicx} %
\usepackage{thmtools}
\usepackage{microtype}

\usepackage{mathtools}
\usepackage{todonotes}
\usepackage{amsfonts}
\usepackage{amssymb}
\usepackage{enumitem}
\usepackage[color]{changebar}
\usepackage{stmaryrd} %
\usepackage{xspace}
\usepackage{cleveref}
\usepackage{multirow,bigdelim}
\usepackage{comment}
\usepackage{arydshln} %
\usepackage{complexity}
\usepackage{dblfloatfix}
\usepackage{tikz}
\usetikzlibrary{positioning}
\usepackage[linesnumbered, lined]{algorithm2e}
\usepackage{tcolorbox}
\usepackage{multicol}

\newtheorem{theorem}{Theorem}
\newtheorem{corollary}[theorem]{Corollary}
\newtheorem{lemma}[theorem]{Lemma}
\newtheorem{proposition}[theorem]{Proposition}

\newtheorem{remark}[theorem]{Remark}

\newenvironment{manualtheorem}[1]{%
  \IfBlankTF{#1}
    {}
    {}%
  \manualtheoreminner
}{\endmanualtheoreminner}

\newenvironment{manualprop}[1]{%
  \IfBlankTF{#1}
    {}
    {}%
  \manualpropinner
}{\endmanualpropinner}

\theoremstyle{definition}
 \newtheorem{definition}[theorem]{Definition}
 \newtheorem{example}[theorem]{Example}

\newcommand{\N}{\mathbb{N}}

\newcommand{\opgraph}[1]{\mathrm{OPGraph}(#1)}
\newcommand{\set}[1]{\bigcup{#1}}

\newcommand{\ledge}[1]{\xrightarrow{#1}}

\newcommand{\Var}{\textrm{Var}}

\newcommand{\Cont}{\textrm{Conts}}

\newcommand{\as}{\textrm{As}}
\newcommand{\dom}{\mathrm{Dom}}
\newcommand{\Dom}{\mathrm{Dom}}
\newcommand{\restr}{\!\restriction\!}

\newcommand{\pX}{\mathcal{R}}
\newcommand{\pY}{\mathcal{S}}

\newcommand{\pZ}{\mathcal{T}}

\newcommand{\fd}[2]{\ensuremath{#1 \rightarrow #2}} %

\newcommand{\KX}{R} 
\newcommand{\KY}{S} 
\newcommand{\KZ}{T} 
\newcommand{\Vars}{\mathrm{Vars}}

\newcommand{\supp}{\mathrm{Supp}}

\newcommand{\ccr}{\ensuremath{\texttt{ccr}}}

\newcommand{\Rp}{\ensuremath{\mathbb{R}_{\geq 0}}\xspace}%
\newcommand{\Qp}{\ensuremath{\mathbb{Q}_{\geq 0}}\xspace}%
\newcommand{\B}{\ensuremath{\mathbb{B}\xspace}}%

\newcommand{\timoncolor}{brown!20}
\newcommand{\jonnicolor}{red!20}
\newcommand{\ninacolor}{green!20}
\newcommand{\juhacolor}{purple!20}
\newcommand{\miikacolor}{teal!20}

\newcommand{\con}{\ensuremath{C}\xspace}
\newcommand{\cons}{\ensuremath{\mathcal{C}}\xspace}

\tcbset{highlightmath/.style={
  colback=gray!10,
  colframe=black,
  boxrule=0.5pt,
  arc=4pt,
  left=5pt,
  right=5pt,
  top=5pt,
  bottom=5pt,
  sharp corners,
}}

\newtcolorbox{mathbox}{highlightmath}

\usepackage[switch,mathlines]{lineno}
\usepackage{etoolbox}

\newcommand*\linenomathpatch[1]{%
  \cspreto{#1}{\linenomath}%
  \cspreto{#1*}{\linenomath}%
  \csappto{end#1}{\endlinenomath}%
  \csappto{end#1*}{\endlinenomath}%
}
\newcommand*\linenomathpatchAMS[1]{%
  \cspreto{#1}{\linenomathAMS}%
  \cspreto{#1*}{\linenomathAMS}%
  \csappto{end#1}{\endlinenomath}%
  \csappto{end#1*}{\endlinenomath}%
}

\expandafter\ifx\linenomath\linenomathWithnumbers
  \let\linenomathAMS\linenomathWithnumbers
  \patchcmd\linenomathAMS{\advance\postdisplaypenalty\linenopenalty}{}{}{}
\else
  \let\linenomathAMS\linenomathNonumbers
\fi

\linenomathpatch{equation}
\linenomathpatchAMS{gather}
\linenomathpatchAMS{multline}
\linenomathpatchAMS{align}
\linenomathpatchAMS{alignat}
\linenomathpatchAMS{flalign}

\makeatletter
\patchcmd{\mmeasure@}{\measuring@true}{
  \measuring@true
  \ifnum-\linenopenaltypar>\interdisplaylinepenalty
    \advance\interdisplaylinepenalty-\linenopenalty
  \fi
  }{}{}
\makeatother

\pagestyle{plain} %

\title{Locally Consistent K-relations:\\ Entailment and Axioms of Functional Dependence}

\iffalse %
\author{%
    Anonymous author(s)
}
\else
\author{%
Timon Barlag$^1$\and
Miika Hannula$^2$\and
Juha Kontinen$^{3}$\and
Nina Pardal$^4$\and
Jonni Virtema$^5$ \\
\affiliations
$^1$Leibniz Universit\"at Hannover, Germany \,
$^2$University of Tartu, Estonia  \\
$^3$University of Helsinki, Finland \,
$^4$University of Edinburgh, UK \,
$^5$University of Glasgow, UK\\
\emails
barlag@thi.uni-hannover.de,
hannula@ut.ee,
juha.kontinen@helsinki.fi\\
npardal@ed.ac.uk, jonni.virtema@glasgow.ac.uk.
}
\fi

\begin{document}

 \maketitle

\begin{abstract}
Local consistency arises in diverse areas, including Bayesian statistics, relational databases, and quantum foundations, and so does
the notion of functional dependence.
We adopt a general approach to study logical inference in a setting that enables both global inconsistency and local consistency.
Our approach builds upon pairwise consistent families of $K$-relations, i.e, relations with tuples annotated with elements of some positive commutative monoid.
The framework covers, e.g., families of probability distributions arising from quantum experiments and their possibilistic counterparts.
As a first step, we investigate the entailment problem for functional dependencies (FDs) in this setting. Notably, the transitivity rule for FDs is no longer sound, but can be replaced by two novel axiom schemas. We provide a complete axiomatisation for, and establish NL-completeness of, the entailment problem of unary FDs, and demonstrate that even this restricted case exhibits context-dependent subtleties. In addition, we explore when contextual families over the Booleans have realisations as contextual families over various monoids. 
\end{abstract}

\section{Introduction}

Local consistency is an important concept in various areas such as Bayesian statistics, relational databases and quantum foundations. {\color{black}In broad terms,} local consistency refers to a family of partial structures (such as marginal distributions or projections) that agree on their overlapping parts.  
{\color{black}
Local consistency is usually a desirable property to enforce, or at least a minimal requirement for an observed phenomenon. It is often contrasted with global consistency, which demands the existence of a single underlying structure whose projections yield the given family. In practice, local consistency sometimes serves as a proxy for global consistency, especially when the latter is infeasible to check or explicitly maintain.

In relational databases, local consistency becomes an issue when data is spread across multiple tables with overlapping attribute sets. The motivation for using multiple tables  %
is data integrity; decomposing data helps prevent anomalies, e.g., related to updates. %
Notably, consistency in relational databases is typically enforced via integrity constraints that are local in nature. For example, data about a person may be distributed across multiple tables, which we expect to be globally consistent; yet this consistency is usually maintained via primary and foreign keys that act on individual tables or pairs of tables only.

In quantum theory, local consistency is a principle that places an upper bound on the outcomes appearing in physical experiments: any experimental setup must produce a family of probability distributions that satisfies local consistency (a.k.a.\ \emph{no-signalling}).
However, not all locally consistent families are compatible with quantum predictions (see, e.g, ~\cite{pawlowski2009information}) and, vice versa, there are quantum realisable measurement scenarios that induce locally consistent probability distributions that are not globally consistent  \cite{Bell:1964kc,06479255de6a}. This intricate tension between local and global consistency (\emph{contextuality}) has been extensively studied in quantum information theory (see, e.g.,  \cite{DBLP:conf/csl/AbramskyBKLM15,Abramsky2011-ABRTSS,PhysRevA.85.062114}) and has been recognised as a fundamental resource in quantum computation,
providing a key explanation for why quantum systems can outperform classical ones \cite{Howard2014}.

In Bayesian statistics, local consistency is central for the representation of global probability distributions via factorisation. For example, Bayesian networks encode a global distribution as a family of marginal and conditional distributions that must be consistent with each other. Such factorisations are very useful as they facilitate computationally feasible probabilistic and approximate reasoning without the need to directly access or store the  global distribution (see, e.g, \cite{10.5555/1795555}).

The intricate relationship between local and global consistency raises natural questions: under what conditions does local consistency entail global consistency.
The so-called \emph{local-to-global consistency} refers to the situation where the underlying structure (such as a database schema or a collection of variable sets) guarantees that every locally consistent family (of relations or distributions) can be extended to a globally consistent one.
The characterisations of this property are typically formulated in the language of graph theory.
For relational databases it is known that a database schema satisfies the local-to-global consistency property if and only if it forms an acyclic hypergraph \cite{10.1145/2402.322389}. Similar characterisations have been also shown in the context of distributions and multisets \cite{AK21,V62}. More recently, these results have been generalised in the setting of $K$-relations, which are relations whose tuples are annotated with elements from some positive commutative monoid \cite{10.1145/3651608}.
In this case, the assumption of acyclicity of the schema is not always sufficient but a more involved characterisation is required.

A distinct but related area to study is what kind of inferences remain valid when only local consistency is guaranteed. In this setting, classical inference principles that rely on global consistency may fail. A particularly simple but instructive example is the failure of transitivity of functional dependence: even if $A$ functionally determines $B$ and $B$ functionally determines $C$ in two locally consistent views, it does not necessarily follow that $A$ functionally determines $C$ globally.\looseness=-1

\begin{example}\label{ex:new}

\begin{figure}
    \centering
\scalebox{0.9}{
\begin{minipage}[t]{0.25\textwidth}
\begin{tabular}[t]{p{1.4cm} p{1.6cm}}
        \texttt{Patient}&\texttt{Diagnosis}  \\
        \toprule
        P1&Cancer \\
        P2&ALS \\
        \bottomrule
        \end{tabular}   

\vspace{3.35mm}
    
\begin{tabular}[t]{p{1.6cm} p{1.8cm}}
        \texttt{Diagnosis}&\texttt{Treatment}  \\
        \toprule
        Cancer&Chemotherapy \\ 
        ALS&Palliative\\
        \bottomrule
    \end{tabular} 
\end{minipage}

\begin{minipage}[t]{0.25\textwidth}
    
\begin{tabular}[t]{p{1.8cm} p{1.8cm}}
         \texttt{Patient}&\texttt{Treatment}  \\
         \toprule
        P1&Chemotherapy \\
        P1&Palliative \\
        P2&Palliative \\
        \bottomrule
    \end{tabular}
\end{minipage}    
    }
    \caption{Locally consistent family of relations for patients, diagnoses, and treatments.}
    \label{fig:intro-medical}
\end{figure}

Figure~\ref{fig:intro-medical} illustrates three distinct semantic views to hospital data. In the top-left table, each patient is assigned exactly one diagnosis, captured by the functional dependency $\fd{\texttt{Patient}}{\texttt{Diagnosis}}$, and in the bottom-left table, each diagnosis is associated with a preferred treatment according to clinical guidelines, yielding $\fd{\texttt{Diagnosis}}{\texttt{Treatment}}$. The right-hand side table reflects actual clinical practise where doctors have freedom to deviate from standard guidelines, for instance by performing different treatments in sequence when earlier interventions fail. As a consequence, this table does not satisfy $\fd{\texttt{Patient}}{\texttt{Treatment}}$, in violation of transitive closure. The three tables are locally consistent in the sense that any two tables agree on patients, treatments, and diagnoses whenever they record them.
However, the tables are not globally consistent, since there exists no single relation whose projections coincide with all three tables simultaneously. 

More generally, treatments of patients may be irreducible to any finite set of
rules, even when additional variables are introduced, as there are always contextual factors and exceptions that escape formal representation. Nevertheless, we may still wish to retain certain rules as abstract
guidelines in an idealized context. In our example, $\fd{\texttt{Diagnosis}}{\texttt{Treatment}}$ captures a general rule in a context where individual
patients are absent by design.
The
example thus models a situation where real-life complexity forces us to drop
the assumption of global representability (analogously to the lack of global classical probability distribution in quantum physics). \looseness=-1
\end{example}
The tension between local and global consistency plays a central role in Ontology-Based Data Access (OBDA), where reasoning is often performed over locally consistent views induced by ontologies and mappings, even when global consistency of the underlying data is absent or undecidable to check. Prior work~\cite{ecai-ConsoleL14,ISWC-LemboMRST15} has highlighted the practical and theoretical relevance of local reasoning principles in this setting, showing that meaningful query answering and inference can be achieved under local consistency assumptions.
Other prior frameworks that develop the theory of logical inference in connection to local consistency include the logic of local inference of~\cite{kishida:LIPIcs.ICALP.2016.113} %
and quantum team logic by \citeauthor{HYTTINEN} \shortcite{HYTTINEN}, both of which are geared for applications in quantum theory.%

\vspace{2mm}
\noindent\textbf{Our Contributions.} 
In this article, we turn the tables and treat local consistency not as a proxy for global consistency, but as a foundational concept in its own right.
 Analogously to~\cite{10.1145/3651608}, %
 we adopt the useful abstraction of $K$-relations, where $K$ is a monoid, as they uniformly generalise Boolean relations, bags, and probability distributions, while preserving the algebraic structure (addition) needed for marginalising and reasoning about consistency. This algebraic abstraction also highlights prominent characteristics of relations, bags, and distributions, clarifies when and why they diverge, and offers a framework for generalisation to other contexts.
 We also redirect our focus from domains such as information theory and graph theory to 
 logical consequence, and investigate how it is impacted by the shift from global to local consistency. As a test case, we study the logical entailment (i.e., the implication problem) between functional dependencies (FDs) over locally consistent families of $K$-relations. It is worth noting that the assumption of local consistency makes it possible to assign a truth value to an FD over a \emph{family of relations} with shared attributes.
It is important to note that while satisfaction of FDs on any given family of locally consistent $K$-relations depends only on the support of that family, validity of entailment can depend on the monoid, for not all monoids yield the same class of supports of locally consistent $K$-relations.
 
  After defining our theoretical framework, we establish a sound and complete axiomatisation for the entailment of unary FDs with binary CDs in the setting of locally consistent $K$-relations.
  A \emph{context dependency} (CD) is a kind of a dummy FD of the form $\fd{x}{x}$ that only declares that variables in $x$ are locally consistent. 
 Further, we observe that the usual transitivity rule of FDs becomes unsound. Interestingly, the transitivity rule disperses into an infinite collection of weaker rules that we name \emph{chain and cycle rules}. In a globally consistent setting, these rules are all derivable from the transitivity rule, and one instance of a chain rule derives the transitivity rule.
 We then establish that the derivability problem of our axiomatisation for unary FDs is NL-complete in conjunction with binary CDs, and remains in polynomial time with arbitrary CDs. See Table \ref{table:results} for a summary.

\begin{table}
\centering
\scalebox{0.85}{
    \begin{tabular}{@{}l c | c c | l @{}}
\toprule
        {FDs} & {CDs} & {Axioms} & {Complexity} & {Reference}  \\
\midrule
{all} & $\dagger$   & R+A+T  & $\in$DLIN & $\ddagger$ \\
{unary} & $\dagger$   & R+T  & NL & easy observation \\
\midrule
unary & binary & R+CL & NL & Thms. \ref{thm:completeness} \& \ref{thm:CRNL}  \\
unary & binary & single R+CL & NL & proof of Thm. \ref{thm:CRNL}  \\
       \midrule
    {unary} & all & R+CL+CH (s) & $\in$PTIME & Prop. \ref{prop:advanced_chain_rule} \& Thm. \ref{thm:chain_ptime}  \\
       {unary} & ternary & single CH (s) & NL & Prop. \ref{prop:1chain}  \\
\bottomrule
    \end{tabular}
}
	  \caption{Overview of our axiomatisation and complexity results on entailment of inclusion dependencies. 
	R, A, and T refer to the standard Armstrong's axioms of reflexivity, augmentation, and transitivity (see page \pageref{subsec:Contextual_FDs}), and CL and CH refer to the cycle and chain rules (see pages \pageref{lem:sound} and \pageref{prop:advanced_chain_rule}).
	Axiomatisations marked with (s) are sound, others are sound and complete.
	Single means that derivations can use only assumptions and a single application of a proof rule.
	Complexity results refer to completeness, if not otherwise stated.
	$\dagger$: non-contextual setting. $\ddagger$:\protect\cite{Armstrong_Axioms,BeeriB79}.
}
\label{table:results}
\end{table}

Finally, we investigate when a given family of locally consistent relations $\pX$ can be enriched to a family of locally consistent $K$-relations (with the same support).
   \citeauthor{Abramsky13} \shortcite{Abramsky13} showed, via examples and using different terminology, that there are families of locally consistent relations that cannot be expanded to families of locally consistent probability distributions with the same support.
   We extend these examples to a complete characterisation of this extendability property in the case where $K$ is cancellative and $\pX$ is a contextual family over what we refer to as a \emph{hole context sequence}.
  
  Lastly, we show that the set of supports of contextual $\Rp$-families and
contextual $\N$-families coincide. Hence, an axiomatisation for FDs is sound (resp. complete) w.r.t. contextual $\Rp$-families iff it is that for contextual $\N$-families.

\section{Preliminaries}
For a natural number $n$, we write \( [n] := \{0, 1, \dots, n-1\} \).
We fix a countably infinite set $\Var$ of \emph{variables}. 
Given a set $A$, an \emph{assignment} (of $A$) is a function $s$ that maps a finite set $D \subseteq \Var$ of variables to some values (in $A$). We call $D$ the \emph{domain} of $s$ and write $\Dom(s)$. The set of all assignments $s \colon D \to A$ is denoted $\as(D, A)$. %

A \emph{monoid} is an algebraic structure $K=(K,+,0)$, where $+$ is associative and $0$ is the identity element of $+$.  $K$ is \emph{positive} if $a+b=0$ entails $a=0=b$, for all $a,b\in K$, and \emph{(left) cancellative} if $a+b=a+c$ entails $b=c$. We associate each monoid $K$ with its \emph{natural order} $\leq$, defined by $a \leq b \iff \exists c: a+c = b$. The natural order of a monoid is reflexive and transitive, meaning that it is a \emph{preorder}.
In this paper, $K$ will always denote a positive commutative non-trivial %
 monoid (the trivial monoid contains only $0$).
 Commutativity and positivity enforce that the marginalisation of a K-relation is well-defined and that the operations of taking a support and a marginalisation commute (these operations are defined next).

Given a finite set of variables $D \subseteq \Var$ and a finite set $A$, a \emph{$K$-relation} over $D$ is a function $\KX \colon \as(D, A) \to K$.
We write $\Vars(\KX)$ for the set of variables $D$ and $\Dom(R)$ for the set $\as(D, A)$.
The \emph{support} of $\KX$ is the relation $\supp(\KX) \coloneqq \{ s \in \Dom(R) \mid \KX(s) \neq 0\}$. 

For a relation $\KX$ and an element $c \in K$ of a monoid $K$, we denote by $c \KX$ the $K$-relation of support $\KX$ such that $(c \KX)(s) = c$ for all $s \in \KX$.
For two $K$-relations with the same domain $X$, $\KX \colon X \to K$ and $\KY \colon X \to K$, we define the $K$-relation $\KZ = \KX + \KY$ as follows:
$\KZ \colon X \to K$ and $\KZ(s) = \KX(s) + \KY(s)$ for all $s \in X$.
For a set of variables $D'$, the $K$-relation $\KX \restr D'$ is the \emph{marginalisation} of the $K$-relation $\KX$ to the variables in $D'$.
This means that $\KX \restr D'$ is a $K$-relation $\as(D', A) \to K$ s.t.
\[
    (\KX \restr D')(s) := \sum_{\substack{s' \in \dom(\KX) \\ s = s' \restr D'}} \KX(s'),
\]
where $s' \restr D'$ is the restriction of the assignment $s'$ to $D'$ and $\sum$ is the monoid's aggregate sum.
In addition to the general case, we consider the Boolean monoid $\mathbb{B}=(\{0,1\}, \lor,0)$ and the monoids of non-negative reals $\Rp=([0,\infty), +,0)$ and natural numbers $\N=(\N, +,0)$ with their usual addition.

In quantum information theory, a context is a set of attributes that can be measured together. More formally,
a \emph{context} $\con$ is a set of variables, and a \emph{context set} $\cons$ is a downward closed set of contexts (i.e, $\con\in\cons$ and $\con
'\subseteq \con$ implies $\con'\in\cons$).  We often represent a context set \( \cons \) in terms of its maximal elements; for instance, we may write $\cons=\{\{x,y\},\{y,z\}\}$ to denote $\cons=\{\{x,y\},\{y,z\},\{x\},\{y\},\{z\},\emptyset\}$.  
To avoid clutter, we often write a context $\{x_1, \dots ,x_n\}$ as $x_1\dots x_n$. 
We call $\cons$ a \emph{uni-context}, if $\bigcup \cons \in \cons$.

Let $\KX$ and $\KY$ be $K$-relations over sets of variables $D$ and $D'$, respectively. We say that $\KX$ and $\KY$ are \emph{consistent} if $\KX \upharpoonright (D \cap D') = \KY \upharpoonright (D \cap D')  $. A family of $K$-relations $\pX$ is \emph{locally consistent} if all pairs of $K$-relations from $\pX$ are consistent, and \emph{globally consistent} if there exists a $K$-relation $\KX$ such that $\Vars(\KX)=\bigcup_{\KY\in \pX}\Vars(\KY)$ and $\KX\restr \Vars(\KY)=\KY$ for all $\KY\in \pX$.

\begin{definition}[Contextual $K$-families]
A \emph{contextual $K$-family} over a context set $\cons$ is a locally consistent set of $K$-relations $\pX$ containing exactly one $K$-relation $\KX$ with domain $C$, for each $C \in \cons$. The $K$-relation of $\pX$ with domain $C$ is referred to as $\pX_C$. %
The set $\cons$ is the \emph{context set} of $\pX_\cons$. %
\looseness=-1
\end{definition}
The \emph{support} $\supp(\pX)$ of $\pX$ is defined as $\{\supp(\pX_\con)\mid \con \in \cons\}$. 
A contextual $\B$-family is simply called a \emph{contextual family}, and identified with its support. Since $K$ is assumed to be positive, we obtain the following proposition.

\begin{proposition}
   If $\pX$ is a contextual $K$-family, then $\supp(\pX)$ is a contextual family.
\end{proposition}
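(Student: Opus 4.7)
The plan is to show that the two defining properties of a contextual family---exactly one element per context in the context set, and local consistency---transfer from $\pX$ to $\supp(\pX)$. The first is immediate from the definition $\supp(\pX) = \{\supp(\pX_\con) \mid \con \in \cons\}$, since each $\con \in \cons$ contributes exactly one $\B$-relation $\supp(\pX_\con)$ with $\Vars(\supp(\pX_\con)) = \con$. So the real content is local consistency: for any $\con, \con' \in \cons$, we must check that $\supp(\pX_\con) \restr (\con \cap \con') = \supp(\pX_{\con'}) \restr (\con \cap \con')$, where the marginalisation is interpreted in $\B$.

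The key step is a lemma relating the support of a marginal to the marginal of a support:
\begin{center}
For every $K$-relation $\KX$ and every $D' \subseteq \Vars(\KX)$,
$\supp(\KX \restr D') = \supp(\KX) \restr D'$.
\end{center}
The inclusion ``$\supseteq$'' is routine. The interesting inclusion ``$\subseteq$'' is exactly where positivity of $K$ is used: if $(\KX \restr D')(s) \neq 0$, then by definition $\sum_{s': s'\restr D' = s} \KX(s') \neq 0$, and by an induction on the number of summands using the positivity axiom ($a+b=0 \Rightarrow a=0=b$), at least one $s'$ with $\KX(s') \neq 0$ must witness $s$, so $s \in \supp(\KX)\restr D'$. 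This is really the only nontrivial point in the proof and the only place where our standing assumption that $K$ is positive enters.

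With the lemma in hand, local consistency of $\supp(\pX)$ follows by a short computation. For any $\con, \con' \in \cons$, local consistency of $\pX$ gives $\pX_\con \restr (\con \cap \con') = \pX_{\con'} \restr (\con \cap \con')$ as $K$-relations. Taking supports of both sides and applying the lemma yields
\[
\supp(\pX_\con) \restr (\con \cap \con') = \supp(\pX_\con \restr (\con \cap \con')) = \supp(\pX_{\con'} \restr (\con \cap \con')) = \supp(\pX_{\con'}) \restr (\con \cap \con'),
\]
which is the desired local consistency in $\B$. Combined with the first paragraph, this establishes that $\supp(\pX)$ is a contextual $\B$-family over $\cons$, i.e., a contextual family.

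The main (minor) obstacle is verifying the support-marginal lemma, which is the only place where the assumption of positivity on $K$ is needed; everything else is bookkeeping with the definitions.
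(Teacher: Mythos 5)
The paper states this proposition without proof, so there is no official argument to compare against; your strategy---reduce everything to the lemma that support commutes with marginalisation, $\supp(\KX \restr D') = \supp(\KX) \restr D'$, and then push $\supp$ through the local-consistency equation $\pX_\con \restr (\con\cap\con') = \pX_{\con'} \restr(\con\cap\con')$---is the natural one, and the overall structure of your argument is sound. The bookkeeping about ``exactly one relation per context'' is also fine.

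However, you have the two inclusions of the key lemma the wrong way around with respect to where positivity is used. The inclusion ``$\supseteq$'' is the one that genuinely needs positivity: given $s'$ with $\KX(s')\neq 0$ and $s = s'\restr D'$, you must conclude that $(\KX\restr D')(s) = \sum_{s''\restr D' = s}\KX(s'')$ is nonzero, i.e., that a sum with at least one nonzero summand is nonzero. This is exactly the inductively extended contrapositive of $a+b=0 \Rightarrow a=0=b$, and it fails without positivity (over the integers, say, a nonzero term can be cancelled by another, so the support of a marginal can be strictly smaller than the marginal of the support). The inclusion ``$\subseteq$'', which you single out as the interesting one and where you invoke positivity, is in fact the routine direction: if $\sum_{s''\restr D'=s}\KX(s'')\neq 0$ then some summand must be nonzero, which is just the contrapositive of $0+\dots+0=0$ and holds in every monoid. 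This misattribution does not affect the truth of the lemma or of the proposition, but the two justifications should be swapped.
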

If $\pX$ is a contextual family and $\alpha \in K$, we set $\alpha \pX$ as  $\{\alpha\pX_\con \mid \con \in \cons\}$. In contrast to the above, if $\pX$ is a contextual family, $\alpha \pX$ is not necessarily a contextual $K$-family (it may violate local consistency; cf. Ex.~\ref{ex:nosup0}).
Let $\pX$ and $\pY$ be contextual $K$-families over a shared context set $\cons$. We set $\pX + \pY$ as $\{\pX_\con + \pY_\con \mid \con \in \cons\}$. Clearly, local consistency is preserved under this operation.
\begin{proposition}\label{prop:closedplus}
    If  $\pX$ and $\pY$ are contextual $K$-families over a shared context set $\cons$, then $\pX + \pY$ is a contextual $K$-family.
\end{proposition}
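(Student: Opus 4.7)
The plan is to verify the two requirements in the definition of a contextual $K$-family for $\pX + \pY$: (i) for every context $C \in \cons$, the family contains exactly one $K$-relation with domain $C$; and (ii) the family is locally consistent. Requirement (i) is immediate from the definition $\pX + \pY = \{\pX_\con + \pY_\con \mid \con \in \cons\}$, since $\pX_\con$ and $\pY_\con$ share the domain $\as(\con, A)$, so $\pX_\con + \pY_\con$ is a well-defined $K$-relation over $\con$, and there is one such relation per context.

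For requirement (ii), the main step is to show that marginalisation distributes over pointwise sum: for any $K$-relations $\KX, \KY$ over a common set of variables $D$ and any $D' \subseteq D$, we have
\[
(\KX + \KY) \restr D' \;=\; (\KX \restr D') + (\KY \restr D').
\]
This follows directly from unfolding the definition of marginalisation and using the commutativity (and associativity) of the monoid operation $+$ to reorder the aggregate sum $\sum_{s' \restr D' = s} \bigl(\KX(s') + \KY(s')\bigr)$ into $\sum_{s' \restr D' = s} \KX(s') + \sum_{s' \restr D' = s} \KY(s')$.

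With distributivity in hand, local consistency is routine. Given any two contexts $\con, \con' \in \cons$, let $D = \con \cap \con'$. Using distributivity and the local consistency of $\pX$ and $\pY$ individually, we compute
\[
(\pX_\con + \pY_\con) \restr D = \pX_\con\!\restr\! D + \pY_\con\!\restr\! D = \pX_{\con'}\!\restr\! D + \pY_{\con'}\!\restr\! D = (\pX_{\con'} + \pY_{\con'}) \restr D,
\]
which establishes pairwise consistency of $\pX + \pY$.

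The only step requiring any care is the distributivity of marginalisation over pointwise sum; everything else is bookkeeping. Since $K$ is commutative by our standing assumption, this step is unproblematic, so I expect the proof to be essentially a one-line computation once the distributivity observation is made explicit.
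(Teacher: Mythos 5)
Your proof is correct and is exactly the argument the paper implicitly relies on: the paper states only that ``Clearly, local consistency is preserved under this operation'' and gives no proof, and your distributivity-of-marginalisation-over-pointwise-sum observation (valid by associativity and commutativity of the monoid) is the right way to make that claim precise.
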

A contextual family $\pX$ is sometimes identified with the set of its assignments $\set{\pX}=\bigcup_{\con\in \cons}\pX_\con$.
Similarly, a contextual $K$-family $\pY$ can be viewed as a single function $\#:\set{\supp(\pY)}\to K_{\neq 0}$,
where $K_{\neq 0}=\{a \in K\mid a\neq 0\}$. %
A contextual family is called \emph{$K$-realisable} if it is the support of some contextual $K$-family.

\begin{example}\label{ex:nosup}
The contextual family in Fig.~\ref{fig:intro-medical} is locally but not globally consistent (as stated in \Cref{ex:new}). In particular, each pair of tables agree on the projection to their common part. To see why global consistency does not hold, suppose toward a contradiction that there exists a single relation that projects to the three tables in the figure. Let $s$ be an assignment of this relation mapping $(\texttt{Patient}, \texttt{Treatment})$ to (P1, Palliative). Then $s$ must map $(\texttt{Diagnosis}, \texttt{Treatment})$ to (ALS, Palliative), as can be seen from the bottom-left table. Howe, the top-left table entails $s$ maps $(\texttt{Patient}, \texttt{Diagnosis})$ to (P2, ALS), which means that $s$ maps \texttt{Patient} to two different values, a contradiction.
\end{example}

\section{Entailment of Dependence}

Next, we focus on entailment of functional dependencies on contextual $K$-families. We first summarise the well-known Armstrong's axioms for functional dependence in the classical setting \cite{Armstrong_Axioms}. We then develop the theory in the contextual setting. In particular, we give a sound and complete axiomatisation for unary dependencies and establish that the related derivability problem is NL-complete.

\subsection{Classical Setting}
A \emph{functional dependency} (FD)  is a statement of the form $\fd{x}{y}$, where $x,y\subseteq D$ are finite sets of variables. An FD $\fd{x}{y}$ is \emph{satisfied} in a relation $X$ (written $X\models \fd{x}{y}$) if for all $s,s'\in X$: $\bigwedge_{z\in x}s(z)=s'(z)$ entails $\bigwedge_{z\in y}s(z)=s'(z)$. For a set $\Sigma$ of FDs, we write $\Sigma\models \fd{x}{y}$ if for all relations $X$: if $X\models \phi$ for all $\phi \in \Sigma$, then $X\models \fd{x}{y}$. 

In~\cite{Armstrong_Axioms}, it is %
established that the entailment $\Sigma\models \fd{x}{y}$ is completely axiomatised by the following set of three axioms, termed \emph{Armstrong's axioms}:
\begin{mathbox}
\begin{itemize}
\item \textbf{Reflexivity.} %
   \(
   \fd{x}{y}, \quad \text{for } y \subseteq x.
   \)

\item \textbf{Augmentation.} If \( \fd{x}{y} \), then \fd{xz}{yz}, for any \( z \).

\item \textbf{Transitivity.} If \( \fd{x}{y} \) and \( \fd{y}{z} \), then
   \(
   \fd{x}{z}.
   \)
\end{itemize}
\end{mathbox}

We write by $\Sigma \vdash_\mathrm{A}\fd{x}{y}$ to denote that $\fd{x}{y}$ can be derived from $\Sigma$ using these three axioms.

\subsection{FDs in the Contextual Setting}
\label{subsec:Contextual_FDs}

For a set of functional dependencies $\Sigma\cup\{\phi\}$, we write $\Vars(\phi)$ for the set of variables occurring in $\phi$ and $\Vars(\Sigma):= \bigcup\{\Vars(\psi) \mid \psi\in\Sigma\}$. 
We write $\Cont(\Sigma) := \{C \subseteq \Vars(\varphi) \mid \varphi \in \Sigma\}$ for the \emph{set of contexts of $\Sigma$}.

For an FD $\fd{x}{y}$ and a $K$-relation $\KX$ s.t. $x,y\subseteq \Vars(\KX)$, it is natural to set that $\KX\models \fd{x}{y}$, if $\supp(\KX)\models \fd{x}{y}$.
Given a contextual $K$-family $\pX$ over a context set $\cons$ and an FD $\phi$ such that $\Vars(\phi)\in \cons$, we say that $\pX$ \emph{satisfies} $\phi$, written $\pX \models \phi$, if $\pX_{\Vars(\phi)}\models\phi$.
Note that local consistency guarantees that $\pX \models \phi$ iff $\pX_{C}\models\phi$, for any (all, resp.) context $C$ such that $\Vars(\phi)\subseteq C$.
Moreover, $\pX \models \phi$ is undefined, and not a well-formed statement, if $\Vars(\phi)\notin \cons$.
We write $\pX \models \Sigma$, if $\pX \models \phi$, for all $\phi\in\Sigma$.

FDs of the form $\fd{x}{x}$ are called \emph{context dependencies} or CDs. A CD $\fd{x}{x}$ is thus a kind of a dummy FD which only states that $x$ is a part of the context set.
We abbreviate CDs by writing $x$ instead of $\fd{x}{x}$.

A finite set of FDs $\Sigma$ \emph{entails} an individual FD $\phi$ over $K$ (written $\Sigma \models_K \phi$) if for every contextual $K$-family $\pX$ over $\cons$, such that $\Cont(\Sigma \cup\{\phi\}) \subseteq \cons$, $\pX \models \Sigma$ implies $\pX \models \phi$. If $K=\B$, we sometimes write $\Sigma \models \phi$ and say that $\Sigma$ \emph{entails} $\phi$ (without explicitly mentioning $\B$).

It is not difficult to see that reflexivity and augmentation remain sound for entailment in the contextual setting. Transitivity, in turn, is not sound; recall that Fig.~\ref{fig:intro-medical} (and its support) showcases a counterexample.
 It is not difficult to see that the following weaker form of the transitivity rule, termed \emph{contextual transitivity}, is sound in the contextual setting.

\begin{mathbox}
 \textbf{Contextual transitivity.} If \( \fd{x}{y} \), \( \fd{y}{z} \) and $\fd{xyz}{xyz}$, then
   \(
   \fd{x}{z}.
   \)
\end{mathbox}

\begin{remark}\label{remark:entailment}
The statement $\Sigma \models \phi$ intuitively reads:  \emph{Every contextual family, where $\Sigma\cup\{\phi\}$ are well-formed formulae and that satisfies every FD in $\Sigma$ also satisfies $\phi$.}
Hence, compositionality of entailment: $\Sigma \models \phi$ and $\Sigma \cup \{\phi\} \models \psi$ implies $\Sigma \models \psi$, is guaranteed only when $\Vars(\phi)\in \Cont(\Sigma \cup \{\psi\})$.
A counterexample is $\Sigma=\{\fd{x}{y},\fd{y}{z}\}, \phi = \fd{xyz}{xyz},\psi=\fd{x}{z}$. 
One way to preserve compositionality is to define a variant of entailment that is parametric to the context set. That is,
for a set of FDs $\Sigma\cup\{ \phi \}$ and a context set \cons such that $\Cont(\Sigma\cup\{ \phi \})\subseteq\cons$, we could define  $\Sigma \models^\cons_K \phi$ if $\pX \models \Sigma$ implies $\pX \models \phi$ for every contextual $K$-family $\pX$ over $\cons$. As our goal is to obtain axiomatisations of the entailment relation that do not utilise context sets in the axioms, we do not follow this approach in this paper.
\end{remark}

 We note the following simple property, which follows from the facts that (i) the support of a contextual $K$-family is a contextual family, and (ii) $\pX\models \phi \Leftrightarrow \supp(\pX)\models \phi$ holds for all contextual $K$-families $\pX$ and FDs $\phi$.
\begin{proposition}\label{prop:simple}
 $\Sigma \models \phi$ implies $\Sigma \models_{K} \phi$, for any set of FDs $\Sigma\cup\{\phi\}$ and positive commutative monoid $K$. 
\end{proposition}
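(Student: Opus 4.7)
The plan is to unfold the definitions and exploit the fact that, by design, FD satisfaction on $K$-relations is defined purely through the support. Fix a positive commutative monoid $K$ and assume the $\B$-entailment $\Sigma \models \phi$. To show $\Sigma \models_K \phi$, take an arbitrary contextual $K$-family $\pX$ over some context set $\cons$ such that $\Vars(\theta) \in \cons$ for every $\theta \in \Sigma \cup \{\phi\}$, and assume $\pX \models \Sigma$.

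Next, I would pass to $\supp(\pX)$. By the earlier proposition, $\supp(\pX)$ is a contextual family (i.e., a contextual $\B$-family) over the same context set $\cons$. The key observation is that for every FD $\psi$ with $\Vars(\psi) \in \cons$, one has $\pX \models \psi$ if and only if $\supp(\pX) \models \psi$. Indeed, by definition, $\pX \models \psi$ holds iff $\pX_{\Vars(\psi)} \models \psi$, which by definition of FD satisfaction on a $K$-relation unfolds to $\supp(\pX_{\Vars(\psi)}) \models \psi$; and this is precisely $\supp(\pX)_{\Vars(\psi)} \models \psi$, i.e., $\supp(\pX) \models \psi$.

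Applying this equivalence to each $\theta \in \Sigma$, from $\pX \models \Sigma$ we obtain $\supp(\pX) \models \Sigma$. Since $\supp(\pX)$ is a contextual family containing the relevant contexts, the hypothesis $\Sigma \models \phi$ yields $\supp(\pX) \models \phi$, and the same equivalence run in reverse gives $\pX \models \phi$. As $\pX$ was arbitrary, we conclude $\Sigma \models_K \phi$.

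I do not expect any real obstacle here: the statement is a routine consequence of the definitions, once one notices that FD satisfaction on a $K$-relation was intentionally set up to depend only on the support. The argument would require genuine work only if FD satisfaction had been defined in a weight-sensitive way (e.g., equality of weights whenever the $x$-values agree), which is not the convention adopted in the paper.
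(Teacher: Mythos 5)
Your proof is correct and follows exactly the paper's own (one-line) justification: pass to $\supp(\pX)$, which is a contextual family by the earlier proposition, and use the fact that $\pX \models \psi \Leftrightarrow \supp(\pX) \models \psi$ since FD satisfaction on $K$-relations is defined through the support. Nothing is missing.
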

It is not self-evident when the converse holds. Example \ref{ex:nosup0} in Section \ref{sec:realisability} showcases that there are contextual families that are not realised as supports of contextual $\Rp$-families.  

The notion of a {formal proof} in the contextual setting has a subtle feature; the intermediate derivations should not introduce new contexts.
Otherwise, for instance, it would be possible to derive $\fd{x}{z}$ from $\fd{x}{y}$ and $\fd{y}{z}$ using reflexivity (to obtain $\fd{xyz}{xyz}$) and contextual transitivity (to obtain $\fd{x}{z}$). In contrast, the question of logical entailment of $\fd{x}{z}$ by $\fd{x}{y}$ and $\fd{y}{z}$ has been formulated over contextual families having contexts $xy,yz,xz$ but not necessarily the context $xyz$.

\begin{definition}[Derivation]\label{def:derivation}
Given a set of inference rules $\mathcal{S}$ and a set of FDs $\Sigma \cup\{\phi\}$, we say that $\phi$ is \emph{derivable} from $\Sigma$ by $\mathcal{S}$, written $\Sigma \vdash_\mathcal{S} \phi$, if there is a finite sequence of FDs $\psi_0, \dots ,\psi_{n-1}$ such that 
\begin{enumerate}
    \item $\psi_{n-1}=\phi$
    \item for $i\in [n]$, $\psi_i$ is from $\Sigma$ or obtained from $\{\psi_0, \dots ,\psi_{i-1}\}$ by using one of the rules of $\mathcal{S}$.
    \item\label{it:new} for $i\in [n]$, $\Vars(\psi_i)\in \Cont(\Sigma \cup\{\phi\})$.
\end{enumerate}
Specifically, \Cref{it:new} is the new criterion introduced for the contextual setting.
As usual, we call $\mathcal{S}$ \emph{complete} if $\Sigma\models \phi$ entails $\Sigma \vdash_\mathcal{S} \phi$, and \emph{sound} if  $\Sigma\vdash_\mathcal{S}  \phi$ entails $\Sigma \models \phi$.
\end{definition}
\begin{remark}\label{rem:composition}
    Analogous to Remark~\ref{remark:entailment}, derivations for $\Sigma \vdash \phi$ and $\Sigma \cup \{\phi\} \vdash \psi$ can be surely composed to a derivation of $\Sigma \vdash \psi$ only when $\Vars(\phi)\in \Cont(\Sigma \cup\{\psi\})$.
  Again, one could regain compositionality by making the proof system parametric to the context set (as described in Remark~\ref{remark:entailment}) or by requiring that $\Vars(\phi)\in\Cont(\Sigma)$ whenever $\Sigma \vdash \phi$.
\end{remark}

We write $\textrm{CA}$ for the modified set of Armstrong's axioms, where the transitivity rule is replaced with the contextual transitivity rule, and $\vdash_{\textrm{CA}}$ for the related entailment relation.

We say that an axiom $\Sigma\vdash \varphi$ is uni-context, if $\bigcup \Cont(\Sigma\cup\{ \varphi\})\in \Cont(\Sigma\cup\{\varphi\})$. A set of axioms is uni-context, if each of its axioms is. E.g., $\textrm{CA}$ is uni-context, whereas the transitivity rule is not.
We conclude this section by establishing that $\vdash_{\textrm{CA}}$ is the strongest uni-context axiomatisation.

\begin{proposition}\label{prop:CAtoA}
Let $\Sigma\cup\{\fd{x}{y}\}$ be a set of FDs and $z$ a set variables such that $\Vars(\Sigma\cup\{\fd{x}{y}\})\subseteq z$. Then
\[
\Sigma\cup\{\fd{z}{z}\}\vdash_{\mathrm{CA}} \fd{x}{y} \text{ iff } \Sigma\vdash_{\mathrm{A}} \fd{x}{y}.
\]
\end{proposition}
\begin{proof}
The "$\Rightarrow$" direction is immediate, for every instance of the contextual transitivity rule can be replaced by an instance of the transitivity rule and \fd{z}{z} is derivable (in the non-contextual setting of $\vdash_{\mathrm{A}}$) by the reflexivity rule.

For "$\Leftarrow$", assume $\Sigma\vdash_{\mathrm{A}} \fd{x}{y}$. It follows by inspecting  \cite[Corollary 2]{BeeriB79} that there exists a proof of $\Sigma\vdash_{\mathrm{A}} \fd{x}{y}$ that only uses variables in $z$. This proof can be modified into a proof of $\Sigma\cup\{\fd{z}{z}\}\vdash_{\mathrm{CA}} \fd{x}{y}$ by replacing each occurrence of the transitivity rule by the corresponding contextual transitivity rule together with an application of the reflexivity rule to obtain \fd{z'}{z'} for the required $z'\subseteq z$ to apply the contextual transitivity rule.
\end{proof}

\begin{restatable}{theorem}{isolated}\label{thm:isolated}   
Let $\textrm{AX}$ be a set of sound uni-context axioms such that $\mathrm{CA} \subseteq \textrm{AX}$, and $\Sigma \cup \{\fd{x}{y}\}$ a set of FDs. Then $\Sigma \vdash_\textrm{AX} \fd{x}{y}$ iff $\Sigma \vdash_\textrm{CA} \fd{x}{y}$.
\end{restatable}

Next, we establish a sound and complete axiomatisation for unary FDs, and as a side product obtain that in this case the converse of \Cref{prop:simple} holds (cf. Corollary \ref{cor:simple}).

\subsection{Unary FDs with Binary Contexts}
An FD $\fd{x}{y}$ is \emph{unary} if $x$ and $y$ are individual variables. 
A CD $x$, where $x$ consists of two variables, is a \emph{binary} CD.
In the rest of this section, we restrict to the case, where $x$, $y$, etc. refer to individual variables, instead of sets of variables. 
Surprisingly, to obtain a complete axiomatisation in the setting of unary FDs and binary CDs, we only need to append reflexivity with one extra axiom schema, stating that every cycle of unary functional dependencies can be inverted.
This property, termed the \emph{cycle rule}, is presented below.
We write $x_1 \to x_2 \to \dots \to x_{n-1} \to x_n$ as a shorthand for $\fd{x_1}{x_2}, \fd{x_2}{x_3}, \dots ,\fd{x_{n-1}}{x_n} $.
\begin{mathbox}
    \label{box:cycle_rule}
    \textbf{Cycle rule.} For $k\geq 1$, if 
    \(x_1 \to x_2 \to \dots \to x_k \to x_1\),
    then $\fd{x_1}{x_k}$.
\end{mathbox}

An alternative interpretation of the cycle rule states that within every strongly connected component of the dependency graph (consider unary FDs as directed edges between variables, see Def.~\ref{def:dep_graph} for a definition) the edge relation is symmetric.   
The soundness proof of the axiom relies on the assumption of local consistency, as we shall shortly see.

\begin{lemma}\label{lem:sound}
    Let $K$ be a positive commutative monoid.
    The cycle rule is sound for contextual $K$-families, i.e., $x_1 \to x_2 \to \dots \to x_k \to x_1 \models_K \fd{x_1}{x_k}$, for all variables $x_1, \dots, x_k$ where $k\geq 1$.
\end{lemma}
    \begin{proof}
    We prove the case $K=\B$; the general case then follows from Prop.~\ref{prop:simple}.
    Assuming the rule is not sound, 
    we construct a contradiction with the assumption that the contextual family is finite.
    The contradiction is depicted in Fig.~\ref{fig:cycle}.
    Let $\pX$ be a contextual family that satisfies $x_1 \to x_2 \to \dots \to x_k \to x_1$ but does not satisfy $\fd{x_1}{x_k}$. For a relation over two variables $x_i,x_j$, write $(a_{i,\ell},a_{j,\ell'})$ to denote the assignment $x_i\mapsto a_{i,\ell}$ and $x_j\mapsto a_{j,\ell'}$.
    By hypothesis, $\pX_{x_k,x_1}$ contains two assignments $(a_{k,1}, a_{1,1})$ and $(a_{k,2}, a_{1,1})$, where $a_{k,1}\neq a_{k,2}$.
    By local consistency and the FD $x_{k-1} \to x_k$, we observe $\pX_{x_{k-1},x_k}$ then contains two assignments $ (a_{k-1,1}, a_{k,1})$ and $ (a_{k-1,2}, a_{k,2})$, where $a_{k-1,1}\neq a_{k-1,2}$.
    By induction we obtain that $\pX_{x_1,x_2}$ contains also two assignments of the form $ (b_{1,1}, a_{2,1})$ and $ (b_{1,2}, a_{2,2})$, where $b_{1,1}\neq b_{1,2}$ and $a_{2,1}\neq a_{2,2}$.

    Then, choosing $a_{1,2}\in \{b_{1,1},b_{1,2}\}\setminus \{a_{1,1}\}$, local consistency and the FD $\fd{x_k}{x_1}$ entail that $\pX_{x_k,x_1}$ contains an assignment of the form $(a_{k,3},a_{1,2})$, where $a_{k,3}\notin \{a_{k,1},a_{k,2}\}$. Now similarly to the above, by induction $\pX_{x_1,x_2}$ contains an assignment of the form $ (b_{1,3}, a_{2,3})$, where $b_{1,3}\notin \{b_{1,1},b_{1,2}\}$. In particular, the elements $b_{1,1},b_{1,2},b_{1,3}$ are distinct. Then, we choose $a_{1,3}\in \{b_{1,1},b_{1,2},b_{1,3}\}\setminus \{a_{1,1},a_{1,2}\}$, and continue as before. Thus, we see that $\pX$ must contain relations whose supports are infinite, which contradicts the definition. Therefore we conclude that the rule is sound. %
\begin{figure}
    \centering
    \begin{tabular}[t]{ccccc}
\begin{tabular}[t]{cc}
        $x_1$ & $x_2$ \\
        \toprule
        $b_{1,1}$ & $a_{2,1}$ \\
        $b_{1,2}$ & $a_{2,2}$ \\\hdashline
        $b_{1,3}$ & $a_{2,3}$ \\\hdashline
        \vdots & \vdots \\
        \bottomrule
    \end{tabular}
    &\hspace{-3.5mm}
    \begin{tabular}[t]{c}
        \\[3ex] %
        $\dots$
    \end{tabular}
&\hspace{-3.5mm}
\begin{tabular}[t]{cc}
        $x_{k-1}$ & $x_k$ \\
        \toprule
        $a_{k-1,1}$ & $a_{k,1}$ \\
        $a_{k-1,2}$ & $a_{k,2}$ \\\hdashline
        $a_{k-1,3}$ & $a_{k,3}$\\\hdashline
        \vdots & \vdots \\
        \bottomrule
    \end{tabular}
    &
    \,
        \begin{tabular}[t]{cc}
        $x_k$ & $x_1$ \\
        \toprule
        $a_{k,1}$ & $a_{1,1}$ \\
        $a_{k,2}$ & $a_{1,1}$ \\\hdashline
        $a_{k,3}$ & $a_{1,2}$\\\hdashline
        \vdots & \vdots \\
        \bottomrule
    \end{tabular}
    \end{tabular}
    \caption{Counterexample for the falsification of the cycle rule.}
    \label{fig:cycle}
\end{figure}
\end{proof}

Notice that the previous proof constructs a counterexample showing that the cycle rule is not sound if  relations with infinite support are allowed. 
Restricting to unary FDs and binary CDs, we next show that the cycle rule together with reflexivity forms an infinite sound and complete proof system. 
We write $\mathcal{CR}$ for this set of axioms.

\begin{restatable}{theorem}{completeness}\label{thm:completeness}   
For every positive commutative monoid $K$, $\Sigma\models_K \phi$ iff $\Sigma \vdash_{\mathcal{CR}} \phi$,
    assuming $\Sigma \cup \{\phi\} $ is a set of unary FDs and binary CDs.
    Specifically, the cycle rule and reflexivity are sound and complete for unary FDs and binary CDs. 
\end{restatable}

    As the same axioms characterise entailment of unary FDs with respect to both $\B$ and $K$, we obtain the converse of \Cref{prop:simple} as a direct corollary.
\begin{corollary}\label{cor:simple}
     Let $K$ be any positive commutative monoid.
 Now, $\Sigma \models \phi$ if and only if $\Sigma \models_{K} \phi$, for any set of unary FDs and binary CDs $\Sigma\cup\{\phi\}$. 
\end{corollary}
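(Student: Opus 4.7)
The plan is to read this off almost immediately from Theorem~\ref{thm:completeness} applied at two different monoids, using Proposition~\ref{prop:simple} to dispatch the forward direction. Indeed, the forward implication $\Sigma \models \phi \Rightarrow \Sigma \models_K \phi$ is precisely the content of Proposition~\ref{prop:simple}, so no further work is required there; that proposition already holds for an arbitrary positive commutative monoid $K$, including the case $K = \B$ vs.\ arbitrary $K$ that we are interested in.

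For the nontrivial direction $\Sigma \models_K \phi \Rightarrow \Sigma \models \phi$, the idea is to observe that Theorem~\ref{thm:completeness} provides a \emph{monoid-independent} syntactic characterisation of entailment for unary FDs and binary CDs. Concretely, I would instantiate Theorem~\ref{thm:completeness} twice: once at the given positive commutative monoid $K$ to obtain $\Sigma \models_K \phi \iff \Sigma \vdash_{\mathcal{CR}} \phi$, and once at the Boolean monoid $\B$ to obtain $\Sigma \models \phi \iff \Sigma \vdash_{\mathcal{CR}} \phi$. Chaining these equivalences through the common middle term $\Sigma \vdash_{\mathcal{CR}} \phi$ yields $\Sigma \models_K \phi \iff \Sigma \models \phi$, which is exactly the claim of the corollary.

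There is essentially no obstacle in the proof of the corollary itself: once Theorem~\ref{thm:completeness} is in hand, the corollary is a one-line consequence, and this is precisely why the theorem was stated uniformly for every positive commutative monoid $K$ (rather than separately for $\B$ and for generic $K$). All the genuine work---in particular verifying that the construction in Case~2 of Theorem~\ref{thm:completeness} lifts from the Boolean case to an arbitrary $K$ by annotating the two-assignment relations with a value $b$ and the four-assignment relation with $a$, where $a+a=b$ for some non-zero $a,b\in K$---has already been absorbed into that theorem's proof. Thus I would simply state the two applications of Theorem~\ref{thm:completeness} and the appeal to Proposition~\ref{prop:simple}, and conclude.
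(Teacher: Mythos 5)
Your proposal is correct and matches the paper's own argument: the corollary is obtained exactly by noting that Theorem~\ref{thm:completeness} characterises both $\models$ and $\models_K$ via the same derivability relation $\vdash_{\mathcal{CR}}$, so the two entailments coincide. The additional appeal to Proposition~\ref{prop:simple} for the forward direction is harmless but not even needed, since chaining the two instances of the theorem already gives both implications.
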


In the classical setting, there is a deterministic linear time algorithm to check whether a given set of FDs entails a given FD \cite{BeeriB79}. 
In the setting of unary FDs with binary contexts, it turns out that the corresponding problem is $\mathrm{NL}$-complete.

Before proving this connection, we introduce the \emph{dependency graph} of a set of FDs.
This dependency graph is essentially the resulting graph when interpreting FDs as directed edges between variables.

\begin{definition}\label{def:dep_graph}
    Let $\Sigma$ be a set of unary FDs.
    The \emph{dependency graph} of $\Sigma$ is the graph $(\Vars(\Sigma), \{(x, y) \mid x \to y \in \Sigma\})$.
\end{definition}

\begin{theorem}\label{thm:CRNL}
    The problem of deciding whether $\Sigma \vdash_{\mathcal{CR}} \phi$ when given a set $\Sigma \cup \phi$ of unary FDs and binary CDs is $\mathrm{NL}$-complete.
    Moreover, the corresponding entailment problem $\Sigma \models_{K} \phi$ is $\NL$-complete, for any commutative monoid $K$.
\end{theorem}
\begin{proof}
By the previously established  completeness theorem (Theorem \ref{thm:completeness}), it suffices to prove the result for $\vdash_{\mathcal{CR}}$. 

    Note first that any FD that can be derived from $\Sigma$ by repeated applications of the cycle rule can also be derived from $\Sigma$ by a single use of the cycle rule.
    Observe that $\fd{x_1}{x_n}$ follows from $\Sigma$ via the cycle rule if and only if there is a cycle containing the edge $(x_n, x_1)$ in the dependency graph of $\Sigma$.
    For a contradiction, assume that there is an FD $\fd{y_1}{y_2}$, such that $\fd{y_1}{y_2}$ derives from $\Sigma$ via the cycle rule, $\fd{x_1}{x_n}$ derives from $\Sigma \cup \{\fd{y_1}{y_2}\}$ via the cycle rule, but $\fd{x_1}{x_n}$ does not derive from $\Sigma$ via the cycle rule.
    Since $\fd{x_1}{y_n}$ derives from $\Sigma \cup \{\fd{y_1}{y_2}\}$ but not from $\Sigma$, there is a cycle containing the edges $(x_n, x_1)$ and $(y_1, y_2)$ in the dependency graph of $\Sigma \cup \{\fd{y_1}{y_2}\}$ and since $\fd{y_1}{y_2}$ derives from $\Sigma$, there is a cycle containing the edge $(y_2, y_1)$ in the dependency graph of $\Sigma$.
    This means, however, that there is a cycle containing the edge $(x_n, x_1)$ path in the dependency graph of $\Sigma$, since the edge $(y_1, y_2)$ can be replaced by the path from $y_1$ to $y_2$ in the respective cycle containing the edge $(y_2, y_1)$.
    Therefore, $\fd{x_1}{x_n}$ can be derived from $\Sigma$ via a single use of the cycle rule.    
    
    Now we can reduce the problem problem of deriving $\fd{x_1}{x_n}$ via the cycle and reflexivity rule to the well-known $\mathrm{NL}$-complete graph reachability problem~\cite{DBLP:books/daglib/0072413}: for a graph $G$ and two nodes $s, t$, $(G, s, t) \in \mathrm{REACH}$ if and only if there is a path from $s$ to $t$ in $G$.
    We do this via the reduction $(\Sigma, \fd{x_1}{x_n}) \mapsto (G_{\Sigma'}, x_n, x_1)$, where $G_{\Sigma'}$ is the dependency graph of $\Sigma$ extended by all unary FDs and binary CDs derivable from the reflexivity rule.

    We show hardness similarly: 
    When asked if there is an $s-t$ path in a graph $G$, we interpret $G$ as the dependency graph of a set of FDs $\Sigma$ and ask whether $\fd{t}{s}$ follows from $\Sigma$ by an application of the cycle rule.
    Formally, this means that we map an instance $((V, E), s, t)$ of $\mathrm{REACH}$ to $(\{(\fd{u}{v} \mid (u, v) \in E\} \cup \{uv \mid (u, v) \in E\}, \fd{t}{s})$.
\end{proof}

\section{$K$-realisability of Contextual Families}\label{sec:realisability}
In this section, we study the $K$-realisability of contextual families.
In particular, we focus on the special case of what we call hole context sequences, and additionally we consider the relationship between realisability in $\Rp$ and $\N$.

\subsection{Hole Context Sets}
It is natural to ask whether or not 
each contextual family is the support of some contextual $K$-family.
This turns out not to be the case for $K=\Rp$. A counterexample (in our terminology, a contextual family over the context set $\{ab,ab',a'b,a'b'\}$) is given in \cite[Proposition 9.1]{Abramsky13}. Our running example %
gives rise to another counterexample, as we argue next.

\begin{example}\label{ex:nosup0}
Consider the contextual family in Fig.~\ref{fig:intro-medical}. For a contradiction, suppose this contextual family is the support of some contextual $\Rp$-family, represented as a function $\# \colon S \to \mathbb{R}_{>0}$, where $S$ is the set of assignments in Fig.~\ref{fig:intro-medical}.
By local consistency, $\#$ uniquely extends to assignments on individual variables. 
Now, abbreviating attribute names and values, we obtain 
\begin{align*}
&\#(\texttt{P}\mapsto \text{P1})=\#(\texttt{D}\mapsto \text{Cancer})
= \#(\texttt{T}\mapsto \text{Chemo})\\
< &\#(\texttt{P,T}\mapsto \text{P1, Chemo})
+\#(\texttt{P,T}\mapsto \text{P1, Palli})\\
=& \#(\texttt{P}\mapsto \text{P1}),
\end{align*}
where the inequality follows since each value of $\#$ is strictly positive.
 Thus the contextual family of Fig.~\ref{fig:intro-medical} cannot be obtained from the support of any contextual $\Rp$-family.
\end{example}

Next, we completely characterise which contextual families are supports of some contextual $K$-family for cancellative monoids $K$ and context sets that form a {hole} (defined below). We show that a contextual family $\pX$ is the support of some contextual $K$-family if and only if its associated \emph{overlap projection graph} is \emph{edge-cyclic}; both terms are defined below.

In what follows, we treat addition and subtraction of natural numbers $i\in [n]$ modulo \( n \). 
Furthermore, collections of contexts will be formally represented as sequences. Since contextual families are defined over context sets, we identify the two whenever convenient.
We say that a sequence of contexts $\cons=(C_0, \dots ,C_{n-1})$, $n\geq 3$, is a \emph{hole} if and only if for all $i,j\in \{0, \dots, n-1\}$, the contexts $C_i$ and $C_{j}$ intersect if and only if $i\in \{j+1,j,j-1\}$. 
A context set is a \emph{hole} if it admits an ordering that forms a hole.
For the remainder of this section, if the context sequence ${\cons}=(C_0, \dots ,C_{n-1})$ is implicitly understood, we write $\pX_i$ as a shorthand for $\pX_{C_i}$, given a contextual $K$-family $\pY$ over $\cons$.

Let $\pX$  be a contextual family over a hole  $\cons=(C_0, \dots ,C_{n-1})$. The \emph{overlap projection graph} of $\pX$, denoted $\opgraph{\pX,\cons}$, is the digraph s.t.:
\begin{itemize}
    \item The vertices consist of the restrictions $s\restr (C_i\cap C_{i+1})$, for each $s\in \pX_i$, $i\in [n]$, and
    \item There is a directed edge from $s\restr (C_i\cap C_{i-1})$ to $s\restr (C_i\cap C_{i+1})$, for each $s\in \pX_i$, $i\in [n]$. 
\end{itemize}
An edge from a vertex $u$ to a vertex $v$ generated by an assignment $s$ is denoted by $u\ledge{s} v$.
 A digraph $G$ is called \emph{edge-cyclic} if each edge $(u,v)\in E$ belongs to some directed cycle (equivalently, if its every connected component is strongly connected).%

 We use our running example to demonstrate this concept.
\begin{example}\label{ex:nosup2}
    Consider again the contextual family in \Cref{fig:intro-medical}. The sequence $\mathcal{C}=(C_1, C_2, C_3)$,  where $C_1=\{\texttt{Patient}, \texttt{Diagnosis}\}$, $C_2=\{\texttt{Diagnosis}, \texttt{Treatment}\}$, and $C_3=\{\texttt{Treatment}, \texttt{Patient}\}$,
    forms a hole.
    The corresponding overlap projection graph, depicted in \Cref{fig:cycles}, has $6$ vertices (one for each value) and $7$ edges (one for each assignment). 
    In the figure, we represent a vertex, that is, a restriction $s\upharpoonright \{x\}$, simply as the unique value $s(x)$ it corresponds to. 
    The vertices $\text{P1}$, $\text{Cancer}$, and $\text{Chemotherapy}$
    form a cycle, and so do $\text{P2}$, $\text{ALS}$, and $\text{Palliative}$. Each edge belongs to some cycle, except for the edge from $\text{Palliative}$ to $\text{P1}$.  The graph is thus not edge-cyclic.
\end{example}

\begin{figure}
    \centering

\begin{tikzpicture}[
  every node/.style={rectangle, draw, minimum size=.5cm},
  >=stealth,
  every path/.style={<-, thick, >=stealth, line width=1pt},
  arrows={-stealth[scale=1.5]}
]

  \node (A) at (0, 2) {P2};
  \node (B) at (2.5, 2.5) {Palliative};
  \node (C) at (5, 2) {ALS};

  \node (D) at (0, 3) {P1};
  \node (E) at (2.5, 3.5) {Chemotherapy};
  \node (F) at (5, 3) {Cancer};

  \draw (A) -- (B);
  \draw (B) -- (C);
  \draw (C) -- (A);

\draw (D) -- (B);

  \draw (D) -- (E);
  \draw (E) -- (F);
  \draw (F) -- (D);
\end{tikzpicture}
    \caption{Example of an Overlap Projection Graph} %
    \label{fig:cycles}
\end{figure}

Then, $K$-realisable contextual families over holes can be characterised via the concept of edge-cyclicity. The proof, which uses auxiliary concepts and results, is deferred to the supplementary material.
\iffull

\begin{proof}
    We need to prove that $w\pX$ is locally consistent. 
    Suppose the cycle has length $\ell$, as in \eqref{eq:simple}. Since $\opgraph{\pX,\cons}$ is cyclically $n$-partite (\Cref{lem:easy}), $\ell$ must be a multiple of $n$; that is, $n\cdot m=\ell$ for some positive integer $m$.
    Fix $i \in [n]$.
    One observes that 
    \[
   |\pX_i\restr (C_{i-1} \cap C_{i})|=|\pX_i| =|\pX_i\restr (C_i \cap C_{i+1})|=m.
    \]
    Moreover, each element from $\pX_i\restr (C_i \cap C_{i+1})$ appears in $\pX_{i+1}\restr (C_i \cap C_{i+1})$, and conversely each element of $\pX_{i+1}\restr (C_i \cap C_{i+1})$ appears in $\pX_i\restr (C_i \cap C_{i+1})$. We conclude that $(w\pX)_i
 \restr (C_i \cap C_{i+1}) = (w\pX)_{i+1} \restr (C_i \cap C_{i+1})$. If $i$ and $j$ are not adjacent, then $C_i \cap C_j= \emptyset$. For this case it suffices to note that $(w\pX)_i\restr \emptyset$ consists of the empty assignment associated with the sum ${w+\dots +w}$ of length $m$. Thus $w\pX$ is a contextual $K$-family. It follows by positivity of $K$ that $\supp(w\pX)=\pX$.
\end{proof}
\else\fi

\begin{restatable}{theorem}{nosup}\label{thm:nosup}   
    Let $K$ be a
    positive commutative cancellative monoid and
      $\pX$ a contextual family over a hole  $\cons=(C_0, \dots ,C_{n-1})$. Then, $\pX$ is the support of some contextual $K$-family if and only if $\opgraph{\pX,\cons}$ is edge-cyclic.
\end{restatable}

\noindent Consider our running example in light of this theorem:
\begin{example}\label{ex:nosup3}
Example \ref{ex:nosup2} and Thm.~\ref{thm:nosup} imply that the contextual family in Fig.~\ref{fig:intro-medical} is not $K$-realisable, for any cancellative monoid $K$. Since $\Rp$ and $\mathbb{N}$ are cancellative, the family in question does not arise as the support of any locally consistent family of distributions or multisets.
However, after adding $(\texttt{Diagnosis, Treatment})\mapsto (\text{Cancer, Palliative})$, the extended overlap projection graph is edge-cyclic. Then, Thm.~\ref{thm:nosup} implies that the obtained contextual family is $K$-realisable. In the case of distributions and multisets, we can introduce uniform weights to each table (in such a way that the total weights of the tables match one another).
\end{example}

\subsection{Contextual $\Rp$-realisability}
Let us next briefly consider the special case of $\Rp$-realisability.
As a consequence of the previous section, we observe that a contextual $\Rp$-family decomposes to cycles over 
$\opgraph{\pX,\cons}$.
Moreover, we establish that a contextual family is the support of some contextual $\Rp$-family if and only if it is the support of some contextual $\N$-family.
\begin{proposition}\label{prop:combination}
    Each  contextual $\Rp$-family over a hole $\cons$ is a non-negative combination of cycles in $\opgraph{\pX,\cons}$.
\end{proposition}
\begin{proof}
    Let $\pX$ be a contextual $\Rp$-family, with a multiplicity mapping $\#\colon \set{\supp(\pX)}\to \mathbb{R}_{>0}$. Assume $\pX$ is non-empty; otherwise the claim holds trivially. \Cref{thm:nosup} entails that there is a non-empty set of assignments $S\subseteq \set{\supp(\pX)}$ forming a cycle in $\opgraph{\pX,\cons}$. 
 Let $w=\min\{\#(s)\mid s \in S\}$. Define the updated family $\pX'\coloneqq \pX-wS$, where subtraction is defined in the obvious way. Then, $\pX'$ is a contextual $\Rp$-family.
 We repeat this process recursively, setting $\pX \coloneqq \pX- wS$ at each step, until the family becomes empty. Since at each step $S$ is removed from the support, no cycle $S$ is used more than once, meaning that the process will terminate.
 This leads to a decomposition of the form $\pX=w_1S_1 + \dots +w_\ell S_\ell$.
\end{proof}

This concludes our discussion of context sets that form a hole. A natural follow-up question is whether a similar graph-theoretic characterisation exists for $\Rp$-realisability over arbitrary context sets. The next example suggests that the answer is negative.
We now write $x_1 \dots x_n\mapsto a_1\dots a_n$ for the assignment that maps $x_1, \dots ,x_n$ respectively to $ a_1,\dots ,a_n$.
\begin{example}
Consider the contextual family represented in \Cref{fig:two}. We note that restricted to $\{ab,bc,ca\}$ the corresponding overlap projection graph has an edge cover, and thus the restriction is
$\Rp$-realisable by \Cref{thm:nosup}. Similar reasoning applies to the restriction on $\{ab',b'c,ca\}$. However, over the full context set, $\Rp$-realisability does not hold anymore.
Assume toward a contradiction that this is not the case.  Let $S$ be the set of  assignments represented in \Cref{fig:two}.
Then, %
there is a mapping $\#\colon S\to \mathbb{R}_{>0}$ which forms a contextual $\Rp$-relation and is such that $\#(bc\mapsto 01)=m> 0$. By local consistency, $\#(ca\mapsto 10)=n\geq m$, and furthermore $\#(ca\mapsto 10)=\#(ab'\mapsto 00)=\#(b'c\mapsto 00)=\#(ca\mapsto 01)=\#(ab\mapsto 11)=\#(bc\mapsto 11)$. Thus 
\[\#(bc\mapsto 01)+\#(bc \mapsto 11) =n+m \neq n = \#(ca\mapsto 10),
\]
violating local consistency.
    
\begin{figure}
    \centering
\[
    \begin{array}[t]{cccccc}
    \begin{array}[t]{cc}
        a&b  \\
        \toprule
        0&0 \\ %
        1&1\\
        \bottomrule
        \end{array}
            \hspace{1mm}
    &
        \begin{array}[t]{cc}
        b&c  \\
        \toprule
        0&0\\
        0&1\\
        1&1\\
        \bottomrule
    \end{array}
            \hspace{1mm}
            \hspace{1mm}
    &
    
    \begin{array}[t]{cc}
         c&a  \\
         \toprule
        0&1\\
        1&0\\
        \bottomrule
    \end{array}
            \hspace{1mm}
    &
        \begin{array}[t]{cc}
        a&b'  \\
        \toprule
        0&0\\
        1&1\\
        \bottomrule
        \end{array}
            \hspace{1mm}
    &
        \begin{array}[t]{cc}
        b'&c  \\
        \toprule
        0&0\\ 
        1&1\\
        \bottomrule
    \end{array}
    \end{array}
    \]
            \caption{Non-realisable contextual family. \label{fig:two}}
\end{figure}
\end{example}

Nevertheless, we can show that $\Rp$-realisability coincides with $\N$-realisability.
\begin{proposition}\label{prop:char}
    A contextual family $\pX$ is the support of some contextual $\Rp$-family if and only if it is the support of some contextual $\N$-family.
\end{proposition}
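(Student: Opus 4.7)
The plan is to prove the two directions separately. The direction from $\N$-realisability to $\Rp$-realisability is immediate: the inclusion $\N \subseteq \Rp$ of monoids means any contextual $\N$-family is automatically a contextual $\Rp$-family with the same support. The content lies in the converse.

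Suppose $\pX$ is the support of a contextual $\Rp$-family $\pY$, which we view as a function $\#\colon \set{\pX} \to \R_{>0}$. The local consistency requirement amounts to a homogeneous system of linear equations in the variables $\{\#(s) \mid s \in \set{\pX}\}$, with coefficients in $\{0,1\}$: for every pair of contexts $C, C' \in \Cont(\pX)$ and every assignment $t$ on $C \cap C'$,
\[
\sum_{\substack{s \in \pX_C \\ s\restr(C\cap C') = t}} \#(s) \;-\; \sum_{\substack{s' \in \pX_{C'} \\ s'\restr(C\cap C') = t}} \#(s') \;=\; 0.
\]
Let $V \subseteq \R^{\set{\pX}}$ denote the solution space of this system. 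Then $V$ is a linear subspace defined over $\Q$ (in fact over $\Z$), and by assumption it contains the strictly positive vector $\#$.

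Now I would apply a density argument: since $V$ is the kernel of an integer matrix, Gaussian elimination over $\Q$ produces a basis of rational vectors, so $V \cap \Q^{\set{\pX}}$ is dense in $V$. The set of strictly positive vectors in $V$ is relatively open in $V$ and non-empty, hence must contain a rational point $\pi\colon \set{\pX} \to \Q_{>0}$. Let $N$ be a common denominator of the values of $\pi$, and set $\tilde{\pi} \coloneqq N\pi$. Because the equations are homogeneous, $\tilde{\pi}$ still solves them, and by construction $\tilde{\pi}(s) \in \N_{>0}$ for every $s \in \set{\pX}$. Thus $\tilde{\pi}$ defines a contextual $\N$-family with support exactly $\pX$.

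The only step that warrants care is the density of rational points, but since the constraint matrix has integer entries this is standard linear algebra; everything else is bookkeeping. The main conceptual obstacle—namely, the possibility that passing from $\R$ to $\N$ might force some $\#(s)$ to become $0$ and thus shrink the support—is avoided precisely because the strictly positive region is relatively open in $V$, so the rational approximation can be chosen to remain strictly positive on every coordinate.
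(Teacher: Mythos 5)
Your proof is correct and follows essentially the same route as the paper: encode local consistency as a rational linear system, extract a strictly positive rational solution, and clear denominators to land in $\N$. The only cosmetic difference is that you justify the existence of a rational solution via density of $V\cap\Q^{\set{\pX}}$ in the solution space together with openness of the positive region, whereas the paper phrases the constraints as a system of linear inequalities and cites the standard fact that real-solvable rational systems admit rational solutions.
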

\begin{proof}
    It suffices to prove the ``only-if'' direction. For each contextual family $\pX$, there exists a system of linear inequalities $\mathbf{A}\mathbf{x}\geq \mathbf{b}$ (with rational coefficients) that has a (real) solution $\mathbf{x}$ if and only if $\pX$ is the support of some contextual $\Rp$-family. Namely, the variables in $\mathbf{x}$ represent the ``weights'' of the assignments in $\pX$, and the inequalities capture the local consistency criterion and the non-negativity of these ``weights''.  
    Furthermore, it is well-known that if a system of the described form has a real solution, then it has a rational solution \cite{0090562}. Hence $S$ must be the support of some contextual $\Qp$-relation $\pX$, which can be transformed into a contextual $\N$-relation by multiplying the weight of each assignment by a sufficiently large constant. 
\end{proof}

\begin{corollary}\label{cor:RtoN}
An axiom system for FDs is sound (complete, resp.) with respect to $\models_{\Rp}$ if and only it is that for $\models_\N$.
\end{corollary}

\section{A Path to More General Axiomatisations}\label{sec:path}
We have shown a sound and complete axiomatisation for FDs restricted to unary FDs and binary CDs. We also showed that the system consisting of reflexivity, augmentation, and contextual transitivity is the strongest uni-context axiomatisation. Next we scout further avenues for more general axiomatisations, and present some preliminary results.

\subsection{Unary FDs with Ternary Contexts}
For unary FDs with ternary CDs, the interaction is significantly more intricate. We have already observed that transitivity can fail: from $\fd{x_1}{x_2}$ and $\fd{x_2}{x_3}$, it does not necessarily follow that $\fd{x_1}{x_3}$. However, if we additionally assume $\fd{c}{x_3}$, $cx_1x_2$, $cx_2x_3$, and $cx_1x_3$, then $\fd{x_1}{x_3}$ does follow.
The next rule  captures this non-trivial interaction between unary FDs and ternary CDs (the above case is its instance with $n = 3$ and $c_i = c$ for all $i \in [n]$).

\begin{mathbox}
    \label{box:contextual_chain_rule}
    \textbf{Contextual chain rule.} If
    \begin{enumerate}
\item\label{item:fd1}
    $x_1 \to x_2 \to \dots \to x_{n-1} \to x_n,$
\item\label{item:fd2}
$\fd{c_1}{ x_n},\fd{c_2}{ x_n}, \dots ,\fd{c_{n-1}}{x_n}$,
\item\label{item:c1} $x_1c_1x_n$,
    \item\label{item:c2} $x_1c_1x_2,c_1x_2c_2,x_2c_2x_3,\dots ,x_{n-2}c_{n-2}x_{n-1},$\\
    $c_{n-2}x_{n-1}c_{n-1},x_{n-1}c_{n-1}x_n$, and
    \item\label{item:c3} $c_1c_2x_n, c_2c_3x_n, \dots ,c_{n-2}c_{n-1}x_n$,
\end{enumerate}
then $\fd{x_1}{ x_n}$.
\end{mathbox}
It is easy to check that the contextual transitivity rule, applied to single variables, is obtained from the above rule by setting $n=3$, $x_1=x$, $x_2=y$, $x_3=z$, $c_1=y=c_2$.

\begin{proposition}\label{prop:advanced_chain_rule}
Let $K$ be a positive commutative monoid.
The contextual chain rule is sound for $\models_K$.
\end{proposition}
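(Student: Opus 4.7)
The plan is to reduce to the Boolean case via \Cref{prop:simple} and then argue by contradiction. It therefore suffices to show that every contextual family $\pX$ satisfying the listed premises also satisfies $\fd{x_1}{x_n}$. Suppose toward contradiction that $\pX \not\models \fd{x_1}{x_n}$. Since $x_1c_1x_n \in \cons$ by~(\ref{item:c1}), I can fix $s, s' \in \pX_{x_1c_1x_n}$ with $s(x_1) = s'(x_1)$ and $s(x_n) \neq s'(x_n)$. Writing $\beta = s(x_n)$, $\beta' = s'(x_n)$, $b_1 = s(c_1)$, $b_1' = s'(c_1)$, the dependency $\fd{c_1}{x_n}$ from~(\ref{item:fd2}) immediately forces $b_1 \neq b_1'$.

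The heart of the proof will be an induction on $i = 2, \dots, n-1$ producing, from a pair $(b_{i-1}, b_{i-1}')$ of distinct $c_{i-1}$-values realised in $\pX$, a common $x_i$-value $a_i$ together with a pair $(b_i, b_i')$ of distinct $c_i$-values, while maintaining the invariant that $\fd{c_i}{x_n}$ sends $b_i \mapsto \beta$ and $b_i' \mapsto \beta'$. Given the pair at stage $i{-}1$, I will invoke local consistency with $\pX_{x_{i-1}c_{i-1}x_i}$ (context from~(\ref{item:c2})) together with $\fd{x_{i-1}}{x_i}$ from~(\ref{item:fd1}) to obtain a common $x_i$-value $a_i$, and then local consistency with $\pX_{c_{i-1}x_ic_i}$ (also from~(\ref{item:c2})) to supply some $c_i$-values $b_i, b_i'$. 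To verify $b_i \neq b_i'$ and the invariant, I will project onto $c_{i-1}c_i$ and extend via $\pX_{c_{i-1}c_ix_n}$ (context from~(\ref{item:c3})); the dependency $\fd{c_{i-1}}{x_n}$ combined with the inductive invariant forces the $x_n$-values of these extensions to be exactly $\beta$ and $\beta'$, after which $\fd{c_i}{x_n}$ yields both $b_i \neq b_i'$ and the updated invariant.

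The contradiction will arise at the final step: local consistency with $\pX_{x_{n-1}c_{n-1}x_n}$ extends the pair $(a_{n-1}, b_{n-1})$ and $(a_{n-1}, b_{n-1}')$ to assignments over $x_{n-1}c_{n-1}x_n$, which must agree on $x_n$ by $\fd{x_{n-1}}{x_n}$ yet must disagree by $\fd{c_{n-1}}{x_n}$ together with $b_{n-1} \neq b_{n-1}'$. The main obstacle I foresee is the bookkeeping required to keep the $\beta$-invariant watertight while propagating distinctness through the alternating ternary contexts of~(\ref{item:c2}) and~(\ref{item:c3}); structurally the argument will be a forward-propagation analogue of the cycle-rule soundness proof in \Cref{lem:sound}, channelled through the auxiliary $c_i$-chain rather than cycling back to $x_1$.
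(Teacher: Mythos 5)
Your plan is correct and is essentially the paper's own proof: the same reduction to the Boolean case via \Cref{prop:simple}, the same starting pair in the context $x_1c_1x_n$ with $\fd{c_1}{x_n}$ forcing $b_1\neq b_1'$, and the same propagation through the alternating contexts $x_{i-1}c_{i-1}x_i$, $c_{i-1}x_ic_i$ and $c_{i-1}c_ix_n$ using $\fd{x_{i-1}}{x_i}$ and $\fd{c_i}{x_n}$, ending in the same clash at $x_{n-1}c_{n-1}x_n$ between $\fd{x_{n-1}}{x_n}$ and $\fd{c_{n-1}}{x_n}$. The only difference is organisational: the paper runs the two context chains in two separate phases (\Cref{fig:c1,fig:c2}) and compares them only at the very end, whereas you interleave them so as to carry the explicit invariant $b_i\mapsto\beta$, $b_i'\mapsto\beta'$ together with the distinctness $b_i\neq b_i'$ (which the paper never needs to record); both versions go through.
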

\begin{proof}
We prove the case $K=\B$; the general case then follows from Prop.~\ref{prop:simple}.
    Let $\pX$ be a contextual family such that the assumptions of the contextual chain rule hold.
    Suppose toward a contradiction that $\pX$ does not satisfy $\fd{x_1}{x_n}$. Then $\pX$ contains two assignments $s$ and $s'$ in context $x_1c_1x_n$ such that $s(x_1)=s'(x_1)$ and $s(x_n)\neq s'(x_n)$.
    Consequently, $\fd{c_1}{ x_n}$ entails that $s(c_1)\neq s'(c_1)$.
    These two assignments are illustrated in the top-left corner of~Figure~\ref{fig:c1}, where without loss of generality we use values $0$ and $1$. By local consistency, $\pX$ then contains two assignments $t$ and $t'$ in the context $x_1c_1x_2$ such that $t(x_1)=t'(x_1)$ and $t(c_1)\neq t'(c_1)$. Furthermore, $\fd{x_1}{x_2}$ entails $t(x_2)= t'(x_2)$. These assignments are illustrated in the top-middle table of Figure~\ref{fig:c1}, again using values $0$ and $1$. By local consistency and induction, the remaining contexts listed in \cref{item:c2} of the contextual chain rule (\ccr) contain assignment pairs of the form depicted in Figure~\ref{fig:c1}, with the values $b_i$ and $b'_i$ being unknown. In particular, \cref{item:fd1} of $\ccr$ guarantees that the two assignments on the remaining relations agree on the variables $x_i$, $i\in [n]$.

    Moving to the contexts listed in \cref{item:c3} of \ccr, the leftmost table in Figure~\ref{fig:c2} illustrates two assignments for the context $c_1c_2x_n$; the values of $c_1$ and $c_2$ are explained by local consistency with the top-right table of Figure~\ref{fig:c1}, and the values of $x_n$ are explained by $\fd{c_1}{ x_n}$ together with local consistency with the top-left table in Figure~\ref{fig:c1}.
    The assignments for the context $c_2c_3x_n$ in Figure~\ref{fig:c2} arise by local consistency with the assignments for the context $c_2x_2c_3$ in Figure~\ref{fig:c1},  %
    and from $\fd{c_2}{ x_n}$ together with local consistency with the leftmost table in Figure~\ref{fig:c2}. %
    Using this reasoning, by induction we eventually obtain the rightmost table in Figure~\ref{fig:c2}. 
    
    However, observe now that the pair of variables $(c_{n-1},x_n)$ is evaluated as $(0,0)$ and $(0,1)$ in \Cref{fig:c1,fig:c2}, respectively. By local consistency, this leads to a contradiction with the FD $\fd{c_{n-1}}{x_n}$. We thus conclude by contradiction that $\pX$  satisfies $\fd{x_1}{x_n}$.\end{proof}

\begin{figure}
	\centering
	\scalebox{0.8}{
		\setlength{\tabcolsep}{3pt} 
		\begin{tabular}{ccccccc}
			\begin{tabular}{ccc}
				$x_1$ & $c_1$ & $x_n$ \\
				\toprule
				0 & 0 & 0 \\
				0 & 1 & 1 \\
				\bottomrule
			\end{tabular}
			&
			\begin{tabular}{ccc}
				$x_1$ & $c_1$ & $x_2$ \\
				\toprule
				0 & 0 & 0 \\
				0 & 1 & 0 \\
				\bottomrule
			\end{tabular}
			&
			\begin{tabular}{ccc}
				$c_1$ & $x_2$ & $c_2$ \\
				\toprule
				0 & 0 & $b_2$ \\
				1 & 0 & $b'_2$ \\
				\bottomrule
			\end{tabular}
			\\[4ex]
			\begin{tabular}{ccc}
				$x_2$ & $c_2$ & $x_3$ \\
				\toprule
				0 & $b_2$ & 0 \\
				0 & $b'_2$ & 0 \\
				\bottomrule
			\end{tabular}
			&
			\begin{tabular}{ccc}
				$c_2$ & $x_2$ & $c_3$ \\
				\toprule
				$b_2$ & 0 & $b_3$ \\
				$b'_2$ & 0 & $b'_3$ \\
				\bottomrule
			\end{tabular}
			&
			$\dots$
			&
			\begin{tabular}{ccc}
				$x_{n-1}$ & $c_{n-1}$ & $x_n$ \\
				\toprule
				0 & $b_{n-1}$ & 0 \\
				0 & $b'_{n-1}$ & 0 \\
				\bottomrule
			\end{tabular}
		\end{tabular}
	}
	\caption{Relations arising from Items~\ref{item:fd1}, \ref{item:c1}, and \ref{item:c2}.}
	\label{fig:c1}
\end{figure}
    
\begin{figure}
	\centering
	\scalebox{0.8}{
		\setlength{\tabcolsep}{3pt}
		\begin{tabular}{cccc}
			\begin{tabular}{ccc}
				$c_1$ & $c_2$ & $x_n$ \\
				\toprule
				0 & $b_2$ & 0 \\
				1 & $b'_2$ & 1 \\
				\bottomrule
			\end{tabular}
			&
			\begin{tabular}{ccc}
				$c_2$ & $c_3$ & $x_n$ \\
				\toprule
				$b_2$ & $b_3$ & 0 \\
				$b'_2$ & $b'_3$ & 1 \\
				\bottomrule
			\end{tabular}
			&
			$\dots$
			&
			\begin{tabular}{ccc}
				$c_{n-2}$ & $c_{n-1}$ & $x_n$ \\
				\toprule
				$b_{n-2}$ & $b_{n-1}$ & 0 \\
				$b'_{n-2}$ & $b'_{n-1}$ & 1 \\
				\bottomrule
			\end{tabular}
		\end{tabular}
	}
	\caption{Relations arising from Items~\ref{item:fd2} and \ref{item:c3}.}
	\label{fig:c2}
\end{figure}

The proof of the proposition suggests that even soundness of inference rules can be non-trivial to check for general FDs.

\subsection{Complexity of unary FDs and arbitrary CDs}

In terms of complexity, one application of the contextual chain rule does not actually induce a higher complexity than the cycle rule from Section~\ref{subsec:Contextual_FDs}: we obtain $\mathrm{NL}$-completeness here as well.
For the containment, note that the conditions of the contextual chain rule can be checked iteratively, by guessing the necessary $x_i$ and $c_i$ values and only keeping constantly many of them in memory at a time.
For hardness, note that the first condition already requires an $x_1-x_n$ path in the underlying dependency graph.
If we ensure that there exists a $c$, such that all other conditions are trivially satisfied, we can simply reduce from $\mathrm{REACH}$ again, as we did in the case for the cycle rule.
The full proof can be found in the supplementary material.

\begin{proposition}
    \label{prop:1chain}
    The problem of deciding whether $\fd{x_1}{x_n}$ can be derived from a set of CDs and unary FDs $\Sigma$ by a single application of the contextual chain rule 
    is $\mathrm{NL}$-complete.
\end{proposition}

Even though deciding whether a single application of the chain rule yields a particular dependency can be checked in logarithmic space, the same idea does not yield a nondeterministic logspace algorithm for arbitrarily many applications of the rule.
The issue is that each application of the rule yields new FDs, which need to be kept in memory, and if a derivation requires polynomially many applications of the chain rule, polynomially many FDs need to be kept in memory.

We do---however---obtain a polynomial-time algorithm by simply checking for all pairs $x,y$, if $\fd{x}{y}$ is derivable from the current set of FDs and adding it, if it is.
This process is repeated until no new FDs can be added. 
For unary FDs this only takes polynomial time.

The full proof can be found in the supplementary material.

\begin{theorem}
    \label{thm:chain_ptime}
The problem whether a unary FD $\fd{y_1}{y_2}$ can be derived from a set $\Sigma$ of FDs and CDs by using the reflexivity rule, the cycle rule and the contextual chain rule is decidable in polynomial time.
\end{theorem}

\section{Conclusions and further directions}

 We considered realisability and logical inference in a theoretical framework for modelling local consistency in the presence of (potential) global inconsistency. The framework is based on locally consistent families of $K$-relations, for positive commutative monoids $K$. We investigated when a given family $\pX$ of locally consistent (set-based) relations can be realised as locally consistent $K$-families. In particular, we gave a complete characterisation for the case where $K$ is cancellative and $\pX$ is a contextual family over {hole context sequences}. Moreover, we established a sound and complete axiomatisation for the entailment of unary FDs in the setting of locally consistent $K$-relations. We discovered two weakenings of the transitivity rule---namely, the chain and cycle rules---that are sound in the contextual setting, while the usual transitivity rule of FDs becomes unsound. Finally, we established a polynomial time algorithm for checking whether a given unary FD is derivable from a given set of CDs and unary FDs by using the reflexivity rule, the cycle rule and the contextual chain rule. For unary FDs and binary CDs, we established that the entailment problem is NL-complete. The derivability problem of unary FDs and binary CDs remains NL-hard, even if only one application of either the contextual cycle or the contextual chain rule is allowed (see \Cref{table:results} on page \pageref{table:results} for an overview of our complexity and axiomatisation results).

\vspace{2mm}
\noindent\textbf{Beyond unary FDs.}
Until now, we have focused on the entailment problem for unary FDs. While these are the simplest kind of FDs, we have already revealed that even this case is far from trivial, and includes context specific intricacies.

We formulated our novel axiom schemas---the contextual cycle rule and the contextual chain rule---to be applied to single variables only. However, the rules remain sound even if the scopes of the rules are extended to sets of variables.
\begin{remark}\label{rem:generalcycle}
The cycle rule extends from single variables to sets of variables.
That is, for contextual $K$-families, if
\[
x_1 \to x_2 \to \dots \to x_k \to x_1,
\]
then $x_1 \to x_k$, for all \emph{sets} of variables $x_1,\dots,x_k$ with $k \ge 1$.
Likewise, the contextual chain rule extends from variables to sets of variables.
Specifically, the proofs of Lemma~\ref{lem:sound} and Proposition~\ref{prop:advanced_chain_rule} do not rely on the variables being
singletons, and apply verbatim to sets of variables. In particular, for a set of variables $X$, we simply interpret $t(X)=t'(X)$ as $\forall x\in X\colon t(x)=t'(x)$.
\end{remark}
With the above extension, we can see that
the contextual transitivity rule is an instance of the contextual chain rule by setting $n=3$, $x_1=x$, $x_2=y$, $x_3=z$, $c_1=y=c_2$ (as we already mentioned in Section \ref{sec:path}).
Hence reflexivity, augmentation, and the contextual chain rule constitute a sound axiomatisation in the contextual setting, that is complete for derivations within a single context (cf. \Cref{prop:CAtoA}\footnote{The proof can be adapted to the variant of CA, where the contextual transitivity rule is replaced with the contextual chain rule. The left-to-right direction will utilise the fact that Armstrong's axioms are complete within a single context, while the right-to-left direction can be applied verbatim, since the contextual transitivity rule is an instance of the contextual chain rule.}).

\vspace{2mm}
\noindent\textbf{Future work.}
We conclude by listing some interesting questions deserving further study:

\begin{itemize}
\item Does there exist a characterisation akin to Theorem \ref{thm:nosup} for arbitrary context sets?
\item Can a more general decomposition theorem---analogous to Proposition \ref{prop:combination}---be obtained for $\Rp$-families?
\item What is the structure of the equivalence relation on monoids that is defined by $K$-realisability of contextual families (cf. Prop. \ref{prop:char})?
E.g., if the monoid has a non-zero idempotent element $a$ (i.e., $a+a=a$), then every contextual family is realisable. What can be said about monoids without idempotent elements, apart from $0$?
\item The chain rule and the cycle rule are presented as infinite axiom schemas.
Can these rules be simplified? Are they implied by some more fundamental rule? Can we replace the infinite schemas with a finite set of rules?
\item Is the full (non-unary) entailment problem for FDs invariant of the monoid (cf. Corollaries \ref{cor:simple} and \ref{cor:RtoN})? For any monoid with a non-zero idempotent element, the entailment problem coincides with that of the Boolean monoid.
\item Is the entailment problem finitely axiomatisable and decidable? For the Boolean monoid,  this is a subproblem of the undecidable implication problem for FDs and inclusion dependencies~\cite{chandra85}. Note that, inclusion dependencies (see e.g., \cite{abi95}) can be used to express the local consistency property (in the non-contextual setting). Hence the intricacies that arise in our axioms for FDs arise already in the non-contextual setting, if FDs are axiomatised together with inclusion dependencies.  
\end{itemize}
Recent work by Hirvonen \shortcite{DBLP:journals/tocl/Hirvonen24,DBLP:journals/corr/abs-2510-08112} considers the implication problem and the axiomatisability of functional dependencies together with unary inclusion dependencies, also in the monoidal setting. As this analysis is restricted %
to the uni-relational setting, their results do not translate to our more general framework.
Nevertheless, generalising their work to our setting may prove fruitful.

\section*{Acknowledgments}
Miika Hannula was partially supported by the ERC grant 101020762. Juha Kontinen was partially funded by grant 359650 of The Research Council of Finland.
Timon Barlag, Nina Pardal, and Jonni Virtema were partially supported by the DFG grant VI 1045-1/1.
Timon Barlag was partially supported by the DAAD project 57710940.

\bibliographystyle{kr}
\bibliography{biblio}

\iffull
\else
\cleardoublepage
\appendix

\fi

\section{Full proofs}

Full proof of Theorem \ref{thm:isolated}.
\isolated*
\begin{proof}
The direction from right-to-left is immediate. We will prove the converse.
Assume $\Sigma \vdash_\textrm{AX} \fd{x}{y}$. Therefore there is a sequence of FDs $\psi_0,\dots, \psi_{n-1}$ such that
\begin{enumerate}
    \item $\psi_{n-1}=\fd{x}{y}$
    \item for $i\in [n]$, $\psi_i$ is from $\Sigma$ or obtained from $\{\psi_0, \dots ,\psi_{i-1}\}$ by using one of the rules of $\mathrm{AX}$.
    \item for $i\in [n]$, $\Vars(\psi_i)\in \Cont(\Sigma \cup\{\fd{x}{y}\})$.
\end{enumerate}
For $i\in [n]$, define $\Sigma_{i}:= \{\psi_0,\dots,\psi_{i-1}\}$. 
If we can show that $\Sigma_{i}\vdash_{\mathrm{CA}}\psi_i$, for every $i\in n$ such that $\psi_i\not\in\Sigma$, we can combine these derivations to obtain a proof of $\Sigma\vdash_{\mathrm{CA}}\fd{x}{y}$.

To that end, let $i\in [n]$ be such that  $\psi_i\not\in\Sigma$. Thus, $\psi_i$ is derivable from $\Sigma_{i}$ using some uni-context rule $(*)\in \mathrm{AX}$. Therefore, there exists a subset $\Sigma'_{i} \subseteq \Sigma_{i}$ such that $\Sigma'_{i}\cup\{\psi_i\}$ is uni-context and 
$\Sigma'_{i}\vdash_{\mathrm{AX}}\psi_i$. Set $z:= \Vars(\Sigma'_{i}\cup\{\psi_i\})$ and note that $z\in \Cont(\Sigma'_{i}\cup\{\psi_i\})$.
Since (non-contextual) relations are special cases of contextual families and since axioms in $\mathrm{AX}$ are sound, it follows that $\Sigma'_{i}\models\psi_i$ holds in the relational (non-contextual) setting.
Since Armstrong's axioms are complete, it follows that $\Sigma'_{i}\vdash_{\mathrm{A}}\psi_i$. Now, by \Cref{prop:CAtoA}, we obtain $\Sigma'_{i}\cup\{\fd{z}{z}\}\vdash_{\mathrm{CA}}\psi_i$ and by the reflexivity axiom, that $\Sigma'_{i} \vdash_{\mathrm{CA}} \{\fd{z}{z}\}$. By \Cref{rem:composition}, we obtain $\Sigma'_{i} \vdash_{\mathrm{CA}} \psi_i$, which implies $\Sigma_{i} \vdash_{\mathrm{CA}} \psi_i$.
\end{proof}

Full proof of Theorem \ref{thm:completeness}.
\completeness*
\begin{proof}
We prove the case $K=\B$; the proof of the general case is identical, except that %
all relation constructions have to be replaced with uniformly annotated $K$-relations, that is every assignment is annotated with the same monoid value in $K$. Also, note that $\pX\models \fd{x}{y}$ iff $\supp(\pX)\models \fd{x}{y}$ holds by definition.

First note that, if $\phi$ is a CD, then both $\Sigma\models \phi$ and $\Sigma \vdash_{\mathcal{CR}} \phi$ trivially hold. Thus we assume that $\phi$ is a unary FD.
Let us consider soundness first. Suppose $(\psi_1, \dots ,\psi_n)$ is a proof of $\psi_n$ from $\Sigma$, and let $\pX$ be a contextual family over $\cons$ such that $\pX \models \Sigma$, and $\Vars(\theta)\in \cons$ for all $\theta \in \Sigma \cup \{\psi_n\}$. We need to show that $\pX \models \psi_n$.
As the induction hypothesis, suppose $\pX \models \psi_i$ for all $i< j$, where $j\leq n$. Suppose also $\psi_j$ is obtained from $\psi_{\ell_1}, \dots, \psi_{\ell_n}$, where $\ell_1 < \dots <\ell_n < j$, by using the cycle rule. In particular, the induction hypothesis entails that $\pX$ satisfies all of $\psi_{\ell_1}, \dots, \psi_{\ell_n}$ and their contexts are in $\cons$ (due to \Cref{it:new} of \Cref{def:derivation}). Hence $\Vars(\psi_j)\in \cons$, due to the form of the cycle rule. 
From \Cref{lem:sound} it now follows that $\pX \models \psi_j$, for $\pX$ contains all the necessary contexts.
The case for reflexivity is immediate by  \Cref{it:new} of \Cref{def:derivation}, 
concluding the induction proof.

    For completeness, we prove the contraposition: $\Sigma \not\models \fd{x}{y}$ if $\Sigma \not\vdash \fd{x}{y}$. Assume $\Sigma \not\vdash \fd{x}{y}$. By reflexivity, $x$ and $y$ are different variables. Let $V$ be the set of variables that appear in $\Sigma \cup \{x,y\}$.
    Consider the directed graph $G$ with node set $V$ and edges consisting of the functional dependencies $\fd{u}{v}$ appearing in $\Sigma$.
    There are two cases:
    
    \textbf{Case 1.} There is no path from $x$ to $y$ in $G$. Let $X$ be the set of variables to which there is a path from $x$, including $x$ itself.
    We create one relation $R$ satisfying $\Sigma$ and falsifying $\fd{x}{y}$. This relation has two assignments: $s$ that maps each variable from $V$ to $0$, and $s'$ that maps each variable from $X$ to $0$ and the remaining variables from $V \setminus X$ to $1$. Since $x \in X$ and $y \in V \setminus X$, we obtain $R \not\models \fd{x}{y}$.
    For every $\fd{u}{v}\in \Sigma$, we have $R \models \fd{u}{v}$, otherwise $u \in X$ and $v \in V \setminus X$, contradicting the definition of $X$. By taking the projections of $R$ on all subsets of $V$, we obtain a contextual family $\pX$ witnessing $\Sigma \not\models \fd{x}{y}$.

    \textbf{Case 2.} There is a path from $x$ to $y$ in $G$. We construct a counterexample contextual family  $\pX$ as follows.    
    For each $\fd{u}{v}$ or $\fd{uv}{uv}$ appearing in $\Sigma$ such that $\{u,v\}\neq \{x,y\}$, construct a relation that has two assignments mapping $(u,v)$ respectively to $(0,0)$ and $(1,1)$. For $\fd{x}{y}$, construct a relation that has four assignments mapping $(x,y)$ respectively to $(0,0),(0,1),(1,0),(1,1)$. The local consistency criterion is satisfied, since each variable has values $0$ and $1$ in every relation in which it appears.
    Let $\pX$ denote the contextual family that arises from the aforementioned relations. 
    Clearly $\pX\not\models\fd{x}{y}$, but $\pX\models\fd{u}{v}$, for each $\fd{u}{v}\in \Sigma$ such that $\{u,v\}\neq \{x,y\}$. It remains to consider the case of $\fd{u}{v}\in \Sigma$, where $\{u,v\}= \{x,y\}$. 
    Clearly, it cannot be that $\fd{u}{v}= \fd{x}{y}$, since this would contradict the assumption that $\Sigma \not\vdash \fd{x}{y}$. Moreover, if $\fd{u}{v}= \fd{y}{x}$, then the path from $x$ to $y$ and the cycle rule entails $\Sigma \vdash \fd{x}{y}$, contradicting our assumption.
    Hence we conclude that all FDs in $\Sigma$ are satisfied. The resulting contextual family $\pX$ thus provides a proof for $\Sigma \not\models \fd{x}{y}$, concluding the second case and the proof. In the general case of $K\neq \B$, using positivity and non-triviality of $K$ we can select two non-zero monoid values $a,b\in K$ such that $a+a=b$. 
    Then assignments in the above defined ($K$-)relations of cardinality $2$ shall be annotated with $b$, while assignments in the ($K$-)relation of cardinality $4$ shall be annotated with $a$. It is easy to see that with this amendment the proof goes through also in the general case.
     \end{proof}

   For the proof of Theorem \ref{thm:nosup},  we need some auxiliary concepts and results.

 A digraph \( G =(V,E)\) is called \emph{cyclically \( n \)-partite} if there exists a partition
\(
\{ V_0, \dots, V_{n-1} \}
\)
of $V$ such that for every edge $(u,v)\in E$ we have
\(
u \in V_i\) and  $v \in V_{i+1}$, for some $i \in [n-1]$.
The following lemma is a simple observation. 
\begin{lemma}\label{lem:easy}
Let $\pX$ be a contextual family over a hole $\cons=(C_0, \dots ,C_{n-1})$. 
Then $\opgraph{\pX,\cons}$ is cyclically \( n \)-partite.
\end{lemma}
\begin{proof}
The claim follows from the definition of $\opgraph{\pX,\cons}$, and the required partition is %
\begin{equation*}%
\{\pX_1 \restr (C_1\cap C_{2}), \dots, \pX_{n-1} \restr (C_{n-1}\cap C_{n}),\pX_n \restr (C_n\cap C_{1})\}.
\end{equation*}
\end{proof}

Then, we show that every cycle gives rise to a uniformly weighted $K$-relation.
Recall that, by definition, a (simple) cycle $(v_1, \dots ,v_\ell,v_1)$ %
does not visit any vertex twice, except for the first and the last one.
We say that a contextual family $\pX$ over a hole is \emph{cyclic} if $\opgraph{\pX,\cons}$ contains a cycle 
\begin{equation}\label{eq:simple}
v_1 \ledge{s_1} v_2 \ledge{s_2} \dots \ledge{s_{\ell-2}} v_{\ell-1}\ledge{s_{\ell-1}}v_\ell \ledge{s_\ell} v_1,
\end{equation}
where $\set{\pX}=\{s_1, \dots ,s_\ell\}$.
The next lemma states that any cyclic family can be realised as a $K$-family using uniform multiplicities.
\begin{lemma}\label{lem:help}   
 Let $\pX$ be a cyclic contextual family over a hole  $\cons=(C_0, \dots ,C_{n-1})$. Let $K$ be a positive commutative monoid and $0 \neq w\in K$.
    Then $w\pX$ is a contextual $K$-family, and $\supp(w\pX)=\pX$.
    \end{lemma}

\begin{proof}
    We need to prove that $w\pX$ is locally consistent. 
    Suppose the cycle has length $\ell$, as in \eqref{eq:simple}. Since $\opgraph{\pX,\cons}$ is cyclically $n$-partite (\Cref{lem:easy}), $\ell$ must be a multiple of $n$; that is, $n\cdot m=\ell$ for some positive integer $m$.
    Fix $i \in [n]$.
    One observes that 
    \[
   |\pX_i\restr (C_{i-1} \cap C_{i})|=|\pX_i| =|\pX_i\restr (C_i \cap C_{i+1})|=m.
    \]
    Moreover, each element from $\pX_i\restr (C_i \cap C_{i+1})$ appears in $\pX_{i+1}\restr (C_i \cap C_{i+1})$, and conversely each element of $\pX_{i+1}\restr (C_i \cap C_{i+1})$ appears in $\pX_i\restr (C_i \cap C_{i+1})$. We conclude that $(w\pX)_i
 \restr (C_i \cap C_{i+1}) = (w\pX)_{i+1} \restr (C_i \cap C_{i+1})$. If $i$ and $j$ are not adjacent, then $C_i \cap C_j= \emptyset$. For this case it suffices to note that $(w\pX)_i\restr \emptyset$ consists of the empty assignment associated with the sum ${w+\dots +w}$ of length $m$. Thus $w\pX$ is a contextual $K$-family. It follows by positivity of $K$ that $\supp(w\pX)=\pX$.
\end{proof}

\nosup*
\begin{proof}
( $\Rightarrow$ )
    Let $\pX$ be the support of some contextual $K$-family $\pY$. Then $\pY$ can be represented as $\#\colon \set{\pX}\to K_{\neq 0}$, a mapping which we extend to sets $S$ by $\#(S)\coloneqq \sum_{s\in S}\#(s)$. Toward a contradiction, assume $\opgraph{\pX,\cons}$ is not edge-cyclic.  Then, some $s \in \set{\pX}$ gives rise to an edge that is not in any cycle. 
    Without loss of generality, we assume that $s$ is associated with the context $C_0$, that is, it belongs to $\pX_0$.
    We define sets of assignments $L^j$, $j\geq 0$, inductively as follows:
    \begin{itemize}
        \item $L^0\coloneqq \{s' \in \pX_{{1}} \mid {s} \restr {(C_0 \cap C_{1})}={s'}\restr{(C_0 \cap C_{1})}\}$,
        \item $L^{j+1}$ extends $L^j$ with assignments $s'\in \pX_{{j+1}}$, $j\in [n]$, such that there is $s \in L^i \cap \pX_{{j}}$ with
        \[ {s} \restr {(C_j \cap C_{j+1})}={s'}\restr{(C_j \cap C_{j+1})}.\]
    \end{itemize}
    Now, let $M \coloneqq \bigcup_{j=1}^\infty L^j$. Since $\pX$ is finite, there exists a positive integer $k$ such that $M= L^k$. 
    Furthermore, we write $M_i \coloneqq \pX_{i}\cap M$, for $i\in [n]$. That is, $M_i$ consists of the assignments of $M$ over the context $C_i$.

    We claim that the following inequalities hold:
    \begin{equation}\label{eq:seq}
    \hspace{-2mm}
    \# (M_0 \cup \{s\}) \leq \#(M_1) \leq \ldots  \leq \# (M_{n-1}) \leq \#(M_0),
    \end{equation}
    where $\leq$ refers to the natural (pre)order of $K$. Note that if this claim holds, we reach a contradiction. 
    Indeed, note that $s \notin L_0$ by hypothesis, hence the transitivity of $\leq$ implies $\# (M_0)+ \#(\{s\})=\# (M_0 \cup \{s\})\leq \#(M_0)$. 
    Thus, there is $c \in K$ such that $\# (M_0)+ \#(\{s\})+ c= \#(M_0)$. By cancellativity, $\#(\{s\})+ c=0$, and hence by positivity, $\#(s)=0$. This contradicts the fact that  $s \in \set{\pX}$, where $\#$ is a function from $\set{\pX}$ to $ K_{\neq 0}$.

    Thus it remains to prove \eqref{eq:seq}. First, define
    \begin{itemize}
        \item $\pX_{i,+} =\{s' \in \pX_i\mid  s' \restr {(C_i \cap C_{i+1})}\in M_i\restr (C_i \cap C_{i+1})\}$,
        \item $
       \pX_{i+1,-} =\{s' \in \pX_{i+1}\mid s' \in {(C_i \cap C_{i+1})}\in M_i\restr (C_i \cap C_{i+1})\}.$
    \end{itemize}
    Then we observe that, for $i\in \{1, \dots ,n-1\}$,
     \begin{multicols}{3}
    \begin{enumerate}
    \item\label{it:eka} $M_i \subseteq \pX_{i,+}$,
\item\label{it:toka} $\#(\pX_{i,+})  = \# (\pX_{i+1,-})$, %
\item\label{it:kolmas} $\pX_{i+1,-}= M_{i+1}$.
    \end{enumerate}
        \end{multicols}
 \Cref{it:eka} is immediate, \cref{it:toka} follows by local consistency, and \cref{it:kolmas} follows by the  construction of $M$. 
 Since $\leq$ is transitive and reflexive, we obtain $\#(M_i)\leq \#(M_{i+1})$.
    Also, for the first inequality in \eqref{eq:seq}, we obtain $M_0\cup \{s\} \subseteq \pX'_{0,+}$, $\#(\pX'_{0,+})  = \# (\pX'_{1,-})$, $\pX'_{1,-}= M_{1}$, where 
 $\pX'_{0,+},\pX'_{1,-}$ are defined otherwise as $\pX_{0,+},\pX_{1,-}$, except that $M_0$ is replaced with $M_0 \cup \{s\}$. This concludes the proof of the ``$\Rightarrow$'' direction

 ( $\Leftarrow$ ) 
 Suppose $\opgraph{\pX,\cons}$ is edge-cyclic. %
 Then, each assignment $s \in \set{\supp(\pX)}$ belongs to a cycle in $\opgraph{\pX,\cons}$.
 In particular, there must be a cyclic contextual family $\pZ_s$ such that $s\in \set{\supp(\pZ_s)}\subseteq \set{\supp(\pX)}$. %
Then, \Cref{lem:help} entails $w\pZ_s$ is a contextual $K$-family, for any $w \neq 0$ (which exists since $K$ is non-trivial).
Since contextual $K$-families over $\cons$ are closed under addition (Prop.~\ref{prop:closedplus}),
the sum $\sum_{s\in \set{\supp(\pX)}} w\pZ_s$ is a contextual $K$-family. Its support is $\pX$, for $K$ is positive. Hence the ``$\Leftarrow$'' follows.
\end{proof}

Full proof for Proposition~\ref{prop:1chain}.

\begin{manualprop}{\ref{prop:1chain}}
    The problem of deciding whether $\fd{x_1}{x_n}$ can be derived from a set of CDs and unary FDs $\Sigma$ by a single application of the contextual chain rule 
    is $\mathrm{NL}$-complete.
\end{manualprop}

\begin{proof}
    To verify that the problem is in $\mathrm{NL}$, we need to show that it can be decided in nondeterministic logarithmic space.
    To that end, it is crucial that each of the five conditions of the contextual chain rule can be checked step by step, without the need of keeping the previous steps in memory.
    Given a set of CDs and unary FDs $\Sigma$ and two variables $x_1$ and $x_n$, the algorithm to solve the aforementioned decision problem works as follows:

    First, we nondeterministically guess $c_1$ and check whether $\fd{c_1}{x_n} \in \Sigma$.
    Afterwards, for all $i \in (2, \dots, \mbox{n-1})$, we guess $c_i$ and $x_i$ and check if $\fd{x_{i-1}}{x_i}, \fd{c_i}{x_n}, x_{i-1}c_{i-1}x_i, c_{i-1}x_ic_i, c_{i-1}c_i,x_n \in \Sigma$.
    Finally, we check whether $x_{n-1}c_{n-1}x_n \in \Sigma$.

    If all those checks are passed, we know that there exist $x_2, \dots, x_{-1}$ and $c_1, \dots, c_{n-1}$ such that the conditions of the rule are met and thus $\fd{x_1}{x_n}$ can be derived from $\Sigma$ by a single application of the contextual chain rule.
    Furthermore, we only ever had to keep four non-input values, i.e., $x_{i-1}, x_i, c_{i-1}$ and $c_i$, in memory, for which logarithmic space is sufficient.

    To show $\mathrm{NL}$-hardness, we reduce from the well-known $\mathrm{NL}$-complete graph reachability problem~\cite{DBLP:books/daglib/0072413}: for a graph $G$ and two nodes $s, t$, $(G, s, t) \in \mathrm{REACH}$ if and only if there is a path from $s$ to $t$ in $G$.

    Note that the first condition of the contextual chain rule essentially encapsulates reachability from $x_1$ to $x_n$ in the dependency graph of $\Sigma$.
    The four further conditions only add the requirement that all those ``edges'' need to be ``witnessed'' by a variable $c_i$.
    If we transform a graph into a set of dependence atoms where each ``edge'' is trivially ``witnessed'', we end up with a plain reachability problem again.

    Given $G=(V,E)$ and two nodes $s$ and $t$, we produce the unary FDs $\{\fd{u}{v} \mid (u,v) \in E\} \cup \{c \to t\}$ and the CDs $\{uvc \mid (u, v) \in E\}$.
    In the resulting set $\Sigma$, there is a chain of FDs from $s$ to $t$, if and only if there is a  path from $s$ to $t$ in $G$. 
    And since all edges are witnessed, the four conditions beyond the first are trivially met.
    Finally, this function is a logspace reduction, since only one edge at a time needs to be kept in memory.
    In total, this yields $\mathrm{NL}$-completeness of one application of the chain rule.
\end{proof}

Full proof for Theorem~\ref{thm:chain_ptime}.

\begin{manualtheorem}{\ref{thm:chain_ptime}}
The problem whether a unary FD $\fd{y_1}{y_2}$ can be derived from a set $\Sigma$ of FDs and CDs by using the reflexivity rule, the cycle rule and the contextual chain rule is decidable in polynomial time.
\end{manualtheorem}

\begin{proof}

    We compute iteratively the set $D$ of FDs and CDs that can be derived from $\Sigma$ and could then be utilised in the possible proof of $\Sigma\vdash \fd{y_1}{y_2}$ in the sense of Definition~\ref{def:derivation}.
    Initially, $D =\Sigma$.

    First, we compute all CDs of size at most $3$ that derivable from $\Sigma$. 
    That is, we add a CD of the form $x_1x_2x_3$ to $D$, if there exists an FD or a CD in $\Sigma\cup\{\fd{y_1}{y_2}\}$ that contains at least the variables $x_1$, $x_2$ and $x_3$.
    They are derivable by using the reflexivity rule and there are only polynomially many such CDs to check.
    We do this, so that the contextual chain rule can be applied.

    We then proceed as follows:
    For all pairs of variables $x, y$, such that $\fd{x}{y} \notin D$, check whether $D$ entails $\fd{x}{y}$ by the contextual chain rule or the cycle rule.
    If so, add $\fd{x}{y}$ to $D$.
    Repeat this process, until $D$ remains unchanged for all remaining variable pairs.
    Finally, check whether $\fd{y_1}{y_2}\in D$, which decides whether $\Sigma\vdash \fd{y_1}{y_2}$.

    There are only polynomially many pairs of variables and each rule takes polynomially many steps to verify, so each pass takes polynomial time.
    Since there are only polynomially many variables, this ``derivation closure'' can be at most polynomially large, implying that it can only change in polynomially many passes in this process.
    Therefore, the entire process takes polynomial time.
\end{proof}

\end{document}